%
\documentclass[proceedings]{stacs}
\stacsheading{year}{numbers}{city}
\usepackage{amsmath,amssymb}
\usepackage{enumerate}
\usepackage{ dsfont }

\usepackage{bm}
\usepackage{latexsym}
\usepackage[english]{babel}

\def \O {\mathcal{O}}
\def \A {\mathcal{A}}

\def \pphi {\delta}
\def \Pphi {\Delta}

\def \cg {[ \negthinspace [ }
\def \cd {] \negthinspace ] }
\def\iver#1{\cg {#1}\cd }


\newcommand{\be}{\begin{equation}}
\newcommand{\ee}{\end{equation}}

\newcommand{\ef}[1]{\, #1}

\newcommand{\eee}[1]{\mathbb{E}{#1}}
\newcommand{\ppp}[1]{\mathbb{P}{#1}}

\newcommand{\proba}{\mathrm{prob}}

\def \cg {[ \negthinspace [ }
\def \cd {] \negthinspace ] }
\def\iver#1{\cg {#1}\cd }

\newcommand{\w}{\omega}
\renewcommand{\a}{\alpha}

\newcommand{\cT}{\mathcal{T}}
\newcommand{\cE}{\mathcal{E}}
\newcommand{\cA}{\mathcal{A}}

\newcommand{\cM}{\mathcal{M}}
\newcommand{\bbm}{{\bm m}}

\newcommand{\poiss}{\mathrm{Poiss}}

\newcommand{\dx}[1] {\!\mathrm{d}{#1}\,}

\renewcommand{\emptyset}{\varnothing}
\newcommand{\smfrac}[2]{\genfrac{}{}{0.25pt}{1}{#1}{#2}}
\newcommand{\atopp}[2]{\genfrac{}{}{0pt}{}{#1}{#2}}

\newcommand{\eval}[1]{\left\langle {#1} \right\rangle}

\newcommand{\bc}{{\bm c}}

\title{Asymptotic enumeration of Minimal Automata}
\author[ref1]{F.\ Bassino}{Frederique Bassino}
\author[ref1]{J.\ David}{Julien David}
\author[ref2]{A.\ Sportiello}{Andrea Sportiello}
\address[ref1]{LIPN,
Universit\'e Paris 13, and CNRS. 
99, av.\ J.-B.\ Cl\'ement, 93430 Villetaneuse, France}
\email{Frederique.Bassino@lipn.univ-paris13.fr}
\email{Julien.David@lipn.univ-paris13.fr}
\address[ref2]{Universit\`a degli Studi di Milano, Dip.~di
  Fisica, and INFN.
Via G.~Celoria 16, 20133 Milano, Italy}
\email{Andrea.Sportiello@mi.infn.it}
\keywords{minimal automata, regular languages,
enumeration of random structures} 
\subjclass{F.2 Analysis of algorithms and problem complexity}


\begin{document}

\begin{abstract} 
We determine the asymptotic proportion of minimal automata, within
$n$-state accessible deterministic complete automata over a $k$-letter
alphabet, with the uniform distribution over the possible transition
structures, and a binomial distribution over terminal states, with
arbitrary parameter $b$. It turns out that a fraction $\sim 1-C(k,b)
\, n^{-k+2}$ of automata is minimal, with $C(k,b)$ a function,
explicitly determined, involving the solution of a transcendental
equation.
\end{abstract}

\maketitle

\section{Introduction}

To any regular language, one can associate in a unique way its minimal
automaton, i.e.\ the only accessible complete deterministic automaton
recognizing the language, with minimal number of states.  Therefore
the \emph{space complexity} of a regular language can be seen as the
number of states of its minimal automaton.  The worst-case complexity
of algorithms dealing with finite automata is most of times known
\cite{YZS}.  But the average-case analysis of algorithms requires
weighted sums on the set of possible realizations, and in particular
the enumeration of the objects that are handled \cite{flaj}.
Therefore a precise enumeration is often required for the algorithmic
study of regular languages.

The enumeration of finite automata according to various criteria (with or
without initial state \cite{kor78}, non-isomorphic \cite{Har65}, up to
permutation of the labels of the edges \cite{Har65}, with a strongly
connected underlying graph \cite{lis71,kor78,rob85,kor86}, acyclic
\cite{lis03},\ldots)
has been investigated since the fifties.

In \cite{kor78} Korshunov determines the asymptotic estimate of the
number of accessible complete and deterministic $n$-state automata
over a finite alphabet.  His derivation, and even the formulation of
the result, are quite complicated.  In~\cite{BN07} a reformulation of
Korshunov's result leads to an estimate of the number of such automata
involving the Stirling number of the second kind.  On the other side,
in \cite{Leb} a different simplification of the involved expressions
is achieved, by highlighting the role of the Lagrange Inversion
Formula in the analysis.


A natural question is to ask which is the fraction of minimal
automata, among accessible complete and deterministic automata of a
given size $n$ and alphabet cardinality $k$.  Nicaud \cite{Nicaud99}
shows that, asymptotically, half of the complete deterministic
accessible automata over a unary alphabet are minimal, thus solving
the question for $k=1$.  Using {\sf REGAL}, a C++-library for the
random generation of automata, the proportion of minimal automata
amongst complete deterministic accessible ones experimentally seems to
be $85,32\%$ for a $2$-letter alphabet and more than $99,99\%$. for a
larger alphabet~\cite{BDN07}.

In this paper we solve this question for a generic integer $k \geq
2$. At a slightly higher level of generality, we give a precise
estimation of the asymptotic proportion of minimal automata, within
$n$-state accessible deterministic complete automata over a $k$-letter
alphabet, for the uniform distribution over the possible transition
structures, and a binomial distribution over terminal states, with
arbitrary parameter $0<b<1$ (the uniform case corresponding to
$b=\frac{1}{2}$). Our theoretical results are in agreement with the
experimental ones.

The paper is organized as follows.  In Section~\ref{sec.statem} we
recall some basic notions of automata theory, and we set a list a
notations that will be used in the remainder of the paper.  Then, we
state our main theorem, and give a short and simple heuristic
argument. In Section \ref{sec.proofstruct} we give a detailed
description of the proof structure, and its subdivision into separate
lemmas. In Section \ref{sec.proofs} we prove in detail the most
difficult lemmas, and give indications for those that are
provable through standard methods. Finally, in Section \ref{sec:other}
we discuss some of the implications of our result.

\section{Statement of the result}
\label{sec.statem}

\noindent
For a given set $E$, $|E|$ denotes the cardinal of $E$. The symbol
$[n]$ denotes the canonical $n$-element set $\{1,2,\ldots,n\}$.
%
%
Let $\cE$ be a Boolean condition, the Iverson bracket
$\iver{\cE}$ is equal to $1$ if $\cE = \textrm{true}$
and $0$ otherwise.
We use $\eee(X)$ to denote the expectation of the quantifier $X$, and
$\ppp(\mathcal{E}) = \eee( \iver{\cE})$ for the probability of the event~$\mathcal{E}$.
For $\{\cE_i\}$ a collection of events, we define a shortcut for the
first moment
\be
{\bm m}(\{\cE_i\})
:=
\sum_i \, \ppp(\cE_i)
=
\eee{} \Big(
\sum_i \; \iver{\cE_i}
\Big)
\ef.
\end{equation}
If $p(c)$ is the probability that exactly $c$ events occur, we have 
${\bm m}(\{\cE_i\})=\sum_c c\, p(c) \geq \sum_{c \geq 1} p(c) = 1-p(0)$,
i.e.\ $p(0) \geq 1 - {\bm m}(\{\cE_i\})$. This elementary inequality, known as
\emph{first-moment bound}, is used repeatedly in the following.

A {\em finite deterministic automaton} $A$ is a quintuple
$A=(\Sigma,Q,\pphi ,q_0,\cT)$ where $Q$ is a finite set of {\em states},
$\Sigma$ is a finite set of {\em letters} called {\em alphabet}, the
{\em transition function} $\pphi$ is a mapping from $Q\times \Sigma$
to $Q$, $q_0\in Q$ is the {\em initial state} and $\cT\subseteq Q$ is
the set of \emph{terminal} (or \emph{final}) states. With abuse of
notations, we identify $\cT(i)\equiv \iver{i \in \cT}$.

An automaton is {\em complete} when its transition function is total.
The transition function can be extended by morphism to all words of
$\Sigma^*$: $\pphi(p,\varepsilon) =p$ for any $p\in Q$ and for any
$u,v\in \Sigma^*$, $\pphi(p, (uv)) = \pphi(\pphi(p,u),v)$. A word
$u\in \Sigma^*$ is \emph{recognized} by an automaton when 
$\pphi(q_0,u)\in \cT$. The \emph{language} recognized by an automaton is the set
of words that it recognizes. An automaton is {\em accessible} when for
any state $p\in Q$, there exists a word $u \in \Sigma^*$ such that
$\pphi(q_0,u) = p$.

We say that two states $p$, $q$ are \emph{Myhill-Nerode-equivalent}
(or just \emph{equivalent}), and write $p \sim q$, if, for all finite
words $u$, $\cT(\pphi(p,u)) = \cT(\pphi(q,u))$ \cite{Ner58}. This
property is easily seen to be an equivalence relation. An automaton is
said to be \emph{minimal} if all the equivalence classes are atomic,
i.e.\ $p \not\sim q$ for all $p \neq q$. Otherwise, the minimal
automaton $A'$ recognising the same language as $A$ has set of states
$Q'$ corresponding to the set of equivalence classes of $A$. This
automaton can be determined through a fast and simple algorithm, due
to Hopcroft and Ullman. For this and other results on automata see
e.g.\ \cite{Hop-Ull, js}.

At the aim of enumeration, the actual labeling of states in $Q$ and
letters in $\Sigma$ is inessential, and we can canonically assume that
$Q=[n]$, $\Sigma=[k]$, and $q_0=1$. In this case, when there is no
ambiguity on the values of $n$ and $k$, we will associate an automaton
$A$ to a pair $(\pphi,\cT)$, of transition function, and set of
terminal states.  The set of complete deterministic accessible
automata with $n$ states over a $k$-letter alphabet is noted
$\A_{n,k}$.

We will determine statistical averages of quantities associated to
automata $A \in \cA_{n,k}$. This requires the definition of a measure
$\mu(A)$ over $\cA_{n,k}$. The simplest and more natural case is just
the uniform measure. We generalise this measure by introducing a
continuous parameter.  For $S$ a finite set, the
\emph{multi-dimensional Bernoulli distribution} of parameter $b$ over
subsets $S' \subseteq S$ is defined as $\mu_{b}(S') = b^{|S'|}
(1-b)^{|S|-|S'|}$. The distribution associated to the quantifier
$|S'|$ is thus the binomial distribution.  We will consider the family
of measures $\mu_{b}^{(n,k)}(A)=
\mu_{\rm unif}^{(n,k)}(\pphi) \mu_{b}^{(n)}(\cT)$, with $\mu_{\rm
  unif}^{(n,k)}(\pphi)$ the uniform measure over the transition
structures of appropriate size, and $\mu_{b}^{(n)}(\cT)$ the Bernoulli
measure of parameter $b$ over $Q \equiv [n]$.  The uniform measure
over all accessible deterministic complete automata is recovered
setting $b=\frac{1}{2}$.  Superscripts
will be omitted
when clear.


The result we aim to prove in this paper is
\begin{theorem}
\label{theo.fullshort}
In the set $\cA_{n,k}$,
with the uniform measure, the
asymptotic fraction of minimal automata is 
\be
\exp \big( -\smfrac{1}{2} c_k n^{-k+2} \big)
\ef,
\end{equation}
with
\begin{align}
c_k &= 
\smfrac{1}{2}\, {\omega_k}^k
\ef;
&
-k\, \w_k &= \ln (1-\w_k)
\ef.
\label{eq.67476548}
\end{align}
More generally, for any $0<b<1$, with measure
$\mu_{b}^{(n,k)}(A)$, the asymptotic fraction is
\be
\exp \big( -(1-2b(1-b)) c_k n^{-k+2} \big)
\ef.
\end{equation}
\end{theorem}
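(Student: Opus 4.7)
The plan is to apply the Poisson paradigm to the random variable $Y=\sum_{p<q}\iver{\cE_{pq}}$, where $\cE_{pq}$ denotes the event that states $p$ and $q$ are Myhill-Nerode equivalent: non-minimality is exactly $\{Y\geq 1\}$, and the goal is $\ppp(Y=0)=e^{-\bbm}(1+o(1))$ with $\bbm=\sum_{p<q}\ppp(\cE_{pq})\simm(1-2b(1-b))\,c_k\,n^{-k+2}$. The lower bound $\ppp(Y=0)\geq 1-\bbm$ is immediate from the first-moment inequality stated in the excerpt; the matching upper bound will be obtained via a factorial-moment argument.

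For fixed $p\neq q$, $\cE_{pq}$ holds iff every non-diagonal pair $(p',q')$ in the product-reachability set $R(p,q)\subseteq[n]^2$ satisfies $\cT(p')=\cT(q')$. Let $G$ be the graph on $[n]$ whose edges are those non-diagonal pairs. Using independence of $\pphi$ and $\cT$ and integrating out the Bernoulli-$b$ law of $\cT$,
\be
\ppp(\cE_{pq}\mid\pphi)=\prod_{C}\bigl(b^{|C|}+(1-b)^{|C|}\bigr)\ef,
\ee
with $C$ ranging over the connected components of $G$. For large $n$, the exploration defining $R(p,q)$ is well-modeled by a branching process with $k$ offspring per non-diagonal pair, each offspring approximately uniform over $[n]^2$ and thus diagonal with probability $1/n$. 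Extinction of this process is rare, of order $n^{-k}$, which already predicts the $n^{-k+2}$ scaling of $\bbm$ after summation over $\binom{n}{2}$ pairs. Conditional on extinction with finite total size $t$, the $2t$ states involved are distinct with high probability, so $G$ is a matching and the product above collapses to $s^{t}$ with $s=1-2b(1-b)$.

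The dominant contribution is the smallest tree, $t=1$: all $k$ children of $(p,q)$ are diagonal, equivalently $\pphi(p,\cdot)=\pphi(q,\cdot)$, combined with $\cT(p)=\cT(q)$. Under the uniform measure on \emph{all} transition functions this configuration has probability $s/n^{k}$ and yields $\bbm\simm(s/2)\,n^{-k+2}$; the correct prefactor $c_k=\w_k^{k}/2$ reflects the bias introduced by restricting $\pphi$ to transition functions producing \emph{accessible} automata. To extract this bias I will use the Korshunov-Lebensztayn encoding of $\cA_{n,k}$ via Lagrange inversion, together with a merge/split correspondence that maps a pair $(A,\{p,q\})$ with $\pphi(p,\cdot)=\pphi(q,\cdot)$ to an accessible $(n-1)$-state automaton with a distinguished state, equipped with a partition of the incoming edges at that state between $p$ and $q$. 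Tracking the asymptotic ratios of $|\cA_{n,k}|$ and the in-degree statistics of accessible automata through this correspondence produces the factor $\w_k^{k}$, so that $\ppp(\cE_{pq})=s\,\w_k^{k}/n^{k}\,(1+o(1))$ and hence $\bbm=(1-2b(1-b))\,c_k\,n^{-k+2}\,(1+o(1))$.

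For the Poisson upper bound I compute the factorial moments $M_{r}=\mathbb{E}[Y(Y-1)\cdots(Y-r+1)]$ and show $M_{r}\simm\bbm^{r}$ for each fixed $r\geq 1$. The argument parallels the computation of $\bbm$: the dominant contribution comes from $r$-tuples of pairs whose product-explorations are mutually disjoint, on which the events $\cE_{p_{i}q_{i}}$ decouple and each contributes a factor $\bbm/\binom{n}{2}$, while overlapping explorations give lower-order terms by the same branching-process counts. Standard inclusion-exclusion then yields $\ppp(Y=0)=e^{-\bbm}(1+o(1))$, completing the theorem. The main obstacle is the accessibility bias: the branching-process heuristic under the uniform measure on $\pphi$ is transparent but gives the wrong prefactor, and extracting the correct factor $\w_k^{k}$ requires careful invocation of the Korshunov-Lebensztayn machinery, uniform control of error terms across tree shapes, and adapting the same analysis to the $r$-tuple setting needed for the factorial moments.
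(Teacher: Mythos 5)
Your plan is a genuine reorganization of the argument: you apply the Poisson paradigm directly to the count $Y$ of Myhill--Nerode-equivalent pairs, a random variable depending jointly on $(\pphi,\cT)$, whereas the paper applies Brun's sieve to the count of M-motifs, a purely structural quantity of $\pphi$ alone, and then integrates out $\cT$ as a separate $(2b(1-b))^r$ factor. The paper's decomposition is cleaner precisely because M-motifs are local subgraphs of bounded size, so the three ingredients (first moment of M-motifs, negligibility of three-state M-motifs, negligibility of equivalences \emph{without} an M-motif --- the quantities $\rho$, $P_{\rm confl}$, $P_{\rm rare}$ of Proposition~\ref{prop.enums}) split into independent lemmas. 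In your version these three tasks are entangled inside a single moment computation: showing that the $t\geq 2$ trees contribute $o(n^{-k})$ to $\ppp(\cE_{pq})$ is exactly the paper's $P_{\rm rare}$ bound in disguise, and showing that overlapping explorations are lower order in $M_r$ is exactly the paper's $P_{\rm confl}$/three-state-M-motif analysis; neither becomes easier for being folded into the factorial moments of $Y$. Also note that the paper observes the error must be $o(n^{-k+2})$, not merely $o(1)$, for $k>2$; your ``$(1+o(1))$ from standard inclusion-exclusion'' needs to be read with that standard, which is fine once one has $M_2=\bbm^2(1+o(1))$ with the Bonferroni inequalities, but it is worth stating explicitly.

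The more substantive gap is the $\w_k^k$ accessibility-bias factor. The paper obtains it cleanly from the k-Dyck-tableau encoding together with Propositions~\ref{prop.stirlfugac} and \ref{prop.quasidyck}: imposing a M-motif amounts to deleting $k$ wiring columns, whose weight is $(\w_k/n)^k$ uniformly over positions. Your merge/split map (identify $p,q$ sharing all out-edges, contract to a single state $m$ in an $(n-1)$-state accessible automaton, keep a nontrivial bipartition of the in-edges at $m$) is a sensible bijective idea, but as sketched it does not visibly produce $\w_k^k$. Tracking it through the in-degree law $\tfrac{1}{\w_k}\poiss_{k\w_k}$ gives, for instance, $\eee(2^{d(m)}-2)=\w_k/(1-\w_k)$ per distinguished state, so the $\w_k^k$ factor has to emerge from the ratio $|\cA_{n-1,k}|/|\cA_{n,k}|$ together with the bipartition count, and there are nontrivial labeling and accessibility bookkeeping issues (both halves of the split must remain reachable, self-loops at $m$ must be handled, and the initial state must be excluded or treated separately). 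This can probably be pushed through with the Lebensztayn/Lagrange-inversion asymptotics for $|\cA_{n,k}|$, but the step is far from automatic and is the place where your route diverges most from --- and is least developed relative to --- the paper's tableau-based computation. You should either carry it out explicitly or replace it with the paper's Proposition~\ref{prop.quasidyck}-type estimate, which delivers $(\w_k/n)^k$ directly.
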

We singled out the constant $\w_k$, instead of only $c_k$, because the
former appears repeatedly, in the evaluation of several statistical
properties of random automata.  Solving (\ref{eq.67476548}), it can
be written in terms of (a branch of) the Lambert $W$-function, as
$\omega_k = 1+ \frac{1}{k} W(-k e^{-k})$, however the implicit
definition (\ref{eq.67476548}) is more of practical use. See
Table \ref{tab.kcw} for a numerical table of values.

\begin{table}
\[
\begin{array}[tb]{r|ccccc}
k & 2 & 3 & 4 & 5 & 6 \\
\hline
\w_k & 0.796812 & 0.940480 & 0.980173 & 0.993023 & 0.997484 \\
c_k  & 0.317455 & 0.415928 & 0.461509 & 0.482799 & 0.492498
\end{array}
\]
\caption{\label{tab.kcw}The constants involved in the statement of 
Theorem \ref{theo.fullshort}, for the first values of~$k$.}
\end{table}

When it is understood that $|\Sigma|=k$, a transition function $\pphi$
is identified with a $k$-uple of maps (or, for short, a
\emph{$k$-map}) $\pphi_{\a} : Q \to Q$, as $\pphi_{\a}(p) \equiv
\Pphi(p,\a)$ (in this case, to avoid confusion, we use $\Pphi$ for the
$k$-uple of $\{\pphi_{\a} \}_{1 \leq \a \leq k}$).  And, clearly, a
$k$-map is identified with the corresponding vertex-labeled,
edge-coloured digraph over $n$ vertices, with uniform out-degree $k$,
such that, for each vertex $i \in [n]$ and each colour $\a \in [k]$,
there exists exactly one edge of colour $\a$ outgoing from $i$. A
terminology of graph theory will occasionally beused in the following.

We use the word \emph{motif} for an unlabeled oriented graph $M$, when
it is intended as denoting the class of subgraphs of a $k$-map
that are isomorphic to $M$. The core of our proof is in the analysis
of the probability of occurrence of certain motifs, that we now
introduce.
\begin{definition}
A \emph{M-motif} $M$ of a transition structure $\Pphi$ is a pair of states
$i \neq j$, and an ordered $k$-uple of states 
$\{\ell_{\a}\}_{1 \leq \a \leq k}$, such that
$\pphi_{\a}(i)=\pphi_{\a}(j)=\ell_{\a}$ (see Figure \ref{fig.motif},
left). Repetitions among $\ell_{\a}$'s are allowed.

A \emph{three-state M-motif} $M^{(3)}$ of a transition structure
$\Pphi$ is the analogue of a M-motif, with three distinct states $i$,
$j$ and $h$, such that
$\pphi_{\a}(i)=\pphi_{\a}(j)=\pphi_{\a}(h)=\ell_{\a}$ for all $1 \leq \a
\leq k$ (see Figure \ref{fig.motif}, right).
\end{definition}

\begin{figure}[tb]
\begin{center}
\setlength{\unitlength}{43.75pt}
\begin{picture}(6,1.88)(0.27,-0.12)
\put(0.27,-0.12){\includegraphics[scale=1.25]{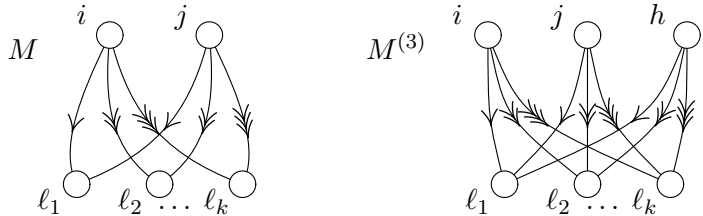}}
\put(0.1,1.25){$M$}
\put(3.2,1.25){$M^{(3)}$}
\put(0.7,1.55){$i$}
\put(1.55,1.55){$j$}
\put(0.36,-0.05){$\ell_1$}
\put(1.06,-0.05){$\ell_2$}
\put(1.4,-0.05){$\ldots$}
\put(1.8,-0.05){$\ell_k$}
\put(3.95,1.55){$i$}
\put(4.8,1.55){$j$}
\put(5.65,1.55){$h$}
\put(4.06,-0.05){$\ell_1$}
\put(4.76,-0.05){$\ell_2$}
\put(5.1,-0.05){$\ldots$}
\put(5.5,-0.05){$\ell_k$}
\end{picture}
\end{center}
\caption{\label{fig.motif}Left: the M-motif. Right: the three-state
  M-motif. The examples are for $k=3$.}
\end{figure}

\noindent
The reason for studying M-motifs is in the two following easy remarks:
\begin{remark}
If the transition structure of an automaton $A$ contains a M-motif,
with states $i$, $j$ and $\{\ell_{\a}\}$, and $\cT(i) = \cT(j)$,
then $i \sim j$ and $A$ is not minimal.
\end{remark}

\noindent

\begin{remark}
Consider a transition structure $\Pphi$ containing 
no three-state M-motifs, and
$r$ M-motifs with states 
$\big\{ i^{a}, j^{a}, \{\ell_{\a}^{a}\} \big\}_{1 \leq a \leq r}$.
Averaging over the possible sets of terminal states with the measure
$\mu_b(\cT)$,
the probability that $\cT(i^a) = \cT(j^a)$ for some 
$1 \leq a \leq r$ is $1 - (2b(1-b))^r$.
\end{remark}



\noindent
Our theorem results as a consequence of a number of statistical facts,
on the structure of random automata, which are easy to believe
although hard to prove. Thus, there is a short, non-rigorous path
leading to the theorem, that we now explain.
\begin{enumerate}
\item A fraction $1-o(1)$ of non-minimal automata contains two
  Myhill-Nerode-equivalent states
  $i \sim j$, which are the incoming states of a M-motif.
\item Random transition structures locally ``look like'' random
  $k$-maps -- this despite the highly non-local, and non-trivial,
  accessibility condition -- the only remarkable difference being in
  the distribution of the incoming degrees $r$ of the states, $p_r=0$
  if $r=0$, and $\frac{1}{\w_k} \poiss_{k \w_k}(r)$ if $r \geq 1$.
\item With this in mind, it is easy to calculate that the average
  number of M-motifs with equivalent incoming states is
  $(1-2b(1-b)) \binom{n}{2} n^{-k} 
  \left[ \frac{\eee{(r(r-1) p_r)}}{k^2} \right]^k$,
  at leading order in $n$,
  that is, $\frac{1}{2} (1-2b(1-b))\, {\w_k}^k\, n^{-k+2}$.
\item Random transition structures also show weak correlations between
  distant parts, and M-motifs are `small', thus, with high
  probability, pairs of M-motifs are non-overlapping. This suggests
  that the distribution of the number of M-motifs is a Poissonian,
  with the average calculated above (as if the corresponding events
  were decorrelated). As a corollary, we get the
  probability that there are no M-motifs. By the first claim, on the
  dominant role of M-motifs, this allows to conclude.
\end{enumerate}



\section{Structure of the proof}
\label{sec.proofstruct}

\noindent
As it often happens,
what seems the easiest way to
get convinced of a claim is not necessarily the easiest path to
produce a rigorous proof. Our proof strategy will be in fact very
different from the sequence of claims collected above. As it is quite
composite, in this section we will outline the decomposition of the
proof into lemmas, and postpone the proofs to Section
\ref{sec.proofs}.

Call $P_{\rm rare}$ the probability, w.r.t.\ $\mu_b(\Pphi,\cT)$ above,
that the transition structure contains no M-motifs,
and still the automaton is non-minimal.  Call
$P_{\rm confl}$ the probability that the transition structure contains
some three-state M-motif. Call $P(r)$ the probability that the
transition structure contains no three-state M-motifs, and exactly $r$
M-motifs.  Thus $1=P_{\rm confl} + \sum_{r \geq 0} P(r)$.

The fraction of pairs $(\Pphi,\cT)$, of transition structures $\Pphi$
with no three-state M-motifs, and lists of terminal states $\cT$ taken with
the Bernoulli measure of parameter $b$,
such that $\cT(i^a) = \cT(j^a)$ for some M-motif, is
$\sum_r
P(r)
\left(
1 - (2b(1-b))^r
\right)
$.
As a consequence, w.r.t.\ the measure $\mu_b(A)$ above, the probability
that an automaton is non-minimal is
\be
\begin{split}
\proba(\textrm{$A$ is non-minimal})
&=
\sum_r
P(r)
\left(
1 - (2b(1-b))^r
\right)
+
\mathcal{O}(P_{\rm rare})
+
\mathcal{O}(P_{\rm confl})
\ef.
\end{split}
\end{equation}
If one can prove that $P_{\rm rare}, P_{\rm confl} =
o\big( 1-P(0) \big)$,
then
\be
\begin{split}
\proba(\textrm{$A$ is non-minimal})
&=
\sum_{r \geq 1}
P(r)
\big(
1 - (2b(1-b))^r
+o(1)
\big)
\ef.
\end{split}
\end{equation}
In particular, if we can prove that $P(r) = \poiss_{\rho}(r)
(1+o(1))$, with $\rho = \sum_r r P(r)$, it would follow that
\be
\begin{split}
\proba(\textrm{$A$ is non-minimal})
&=
\big(1 - e^{-\rho (1-2b(1-b))}\big)
(1+o(1))
\ef.
\end{split}
\end{equation}
This corresponds to the statement of Theorem~\ref{theo.fullshort},
with $\rho=c_k n^{-k+2}$.

Note that our error term is not only small w.r.t.~$1$, but also, as
important for probabilities, it is small also w.r.t.\ $\min(p,1-p)$,
with $p$ the probability of our event of interest.  As, for an
alphabet with $k$ letters, $p \sim n^{-k+2}$ has a non-trivial scaling
with size when $k>2$, this difference is relevant.

So we see that Theorem \ref{theo.fullshort} is implied by
\begin{proposition}
\label{prop.enums}
The statements in the following list do hold
\begin{enumerate}
\item $P(r) = \poiss_{\rho}(r)(1+o(1))$, for some $\rho$;
\item $\rho = c_k n^{-k+2} (1+o(1))$;
\item $P_{\rm confl} = o(n^{-k+2})$;
\item $P_{\rm rare} = o(n^{-k+2})$.
\end{enumerate}
\end{proposition}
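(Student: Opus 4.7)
The plan is to address the four statements of Proposition \ref{prop.enums} in the order (2), (3), (1), (4), reflecting both logical dependence (the $\rho$ of (1) is pinned down in (2)) and increasing difficulty. The workhorse throughout is the \kl\ asymptotic formula for $|\cA_{n,k}|$ \cite{kor78,Leb,BN07}, applied both directly and to the subfamilies of automata constrained to contain a prescribed motif or collection of motifs; the constant $\omega_k$ enters through such ratios.

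For (2), by symmetry of the uniform measure I would write
\[
\rho \;=\; \binom{n}{2} \sum_{\ell_1,\ldots,\ell_k \in [n]} \frac{\big|\{A \in \cA_{n,k}: \pphi_\a(1)=\pphi_\a(2)=\ell_\a \text{ for all } \a\}\big|}{|\cA_{n,k}|},
\]
and estimate each ratio via the \kl\ asymptotics: fixing the motif removes $2k$ transition edges and effectively merges the two incoming vertices for the purposes of the accessibility condition; the resulting ratio factorises over the $k$ colors, with each factor essentially the second factorial moment (at the target $\ell_\a$) of the limiting in-degree distribution. A routine calculation then recovers $\rho = \tfrac{1}{2}\, \omega_k^k\, n^{-k+2}(1+o(1))$, fixing $c_k = \tfrac{1}{2}\omega_k^k$. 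Statement (3) is the analogous first-moment computation with three incoming states: $\bbm(\{M^{(3)}\text{-motifs}\}) = O(n^3 \cdot n^k \cdot n^{-3k}) = O(n^{3-2k}) = o(n^{-k+2})$ for every $k \geq 2$, so $P_{\rm confl} = o(n^{-k+2})$ by Markov's inequality.

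For (1), I apply the method of factorial moments: for each fixed $r \geq 1$, expand $\mathbb{E}\binom{N_M}{r}$ as a sum over unordered $r$-tuples of distinct potential M-motifs. Vertex-disjoint $r$-tuples contribute $\rho^r/r!\,(1+o(1))$ by the same factorisation argument as in (2); overlapping $r$-tuples either contain a three-state M-motif (already bounded by (3)) or carry an extra factor of $n^{-1}$ per shared vertex, hence are negligible. Summing yields $\mathbb{E}\binom{N_M}{r} \to \rho^r/r!$, giving Poisson convergence of $P(r)$.

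The main obstacle is (4). If $A$ is non-minimal with no M-motif, pick a pair $i \neq j$ with $i \sim j$. For every letter $\a$, either $\pphi_\a(i) = \pphi_\a(j)$ or $(\pphi_\a(i), \pphi_\a(j))$ is another distinct equivalent pair, and absence of M-motifs forbids the former for \emph{all} $\a$ simultaneously. Iterating, the orbit of $\{i,j\}$ under $\{\pphi_\a \otimes \pphi_\a\}_{\a \in [k]}$ in the space of distinct pairs generates a closed nonempty sub-structure (an invariant set of unordered pairs, each with at least one letter that does not merge). The plan is to bound $P_{\rm rare}$ by a union bound over the possible closed configurations, stratified by size and incidence type. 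Each admissible configuration imposes substantially more transition constraints than a single M-motif, and a careful case analysis should show that already the smallest closed configurations give a first-moment contribution $o(n^{-k+2})$ for every $k \geq 2$. The delicate technical points will be exhausting the combinatorial types of closed orbits, accounting for partial vertex coincidences between distinct pairs, and preserving the required decay rate after the accessibility correction.
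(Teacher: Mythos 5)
Your treatment of (1)--(3) is essentially the paper's: Part~3 is a first-moment bound on the expected number of three-state M-motifs, Part~2 is the first-moment computation for M-motifs giving $\rho = \tfrac12\w_k^k n^{-k+2}$, and Part~1 follows from factorial-moment estimates for $r$-tuples of disjoint M-motifs (the paper phrases this as Brun's sieve, which is the same machinery). The paper organizes all of this through the bijection with $k$-Dyck tableaux and two technical lemmas (Propositions~\ref{prop.stirlfugac} and \ref{prop.quasidyck}) rather than by directly manipulating the Korshunov--Lebensztayn asymptotics, but the skeleton is identical.

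For Part~4, however, there is a genuine gap in your plan, and it is not merely technical. You propose to bound $P_{\rm rare}$ by a union bound over ``closed configurations'' of equivalent pairs, expecting the decay to come from transition-structure constraints, with the optimistic remark that ``already the smallest closed configurations give a first-moment contribution $o(n^{-k+2})$.'' Two things go wrong. First, the small closed orbits (size $O(1)$) are essentially the sink/quasi-sink motifs the paper deals with separately by a first-moment bound; they are indeed $o(n^{-k+2})$ but that is not where the difficulty is. Second, and more seriously, for orbits of growing size $s$ the union bound over abstract orbit types does not close by itself: the probability (over $\Pphi$) of a fixed orbit of $s$ disjoint pairs decays like $n^{-\Theta(ks)}$, while the number of abstract orbit types grows like $s^{\Theta(ks)}$ and the placement entropy like $n^{\Theta(s)}$, so for $s$ large enough the transition constraints alone produce a bound that blows up. Nowhere in your sketch of (4) do you invoke the factor $(1-2b(1-b))$ per pair coming from averaging over $\cT$, yet this factor is what actually kills the large-orbit contribution. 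The paper's proof instead first shows, by a BFS argument excluding sink- and quasi-sink-type motifs, that with probability $1-o(n^{-k+3/2})$ every non--quasi-sink pair $\{i,j\}$ has a simultaneous breadth-first exploration with fewer than $k+1$ ``collisions'' below depth $\sim\gamma\ln n/\ln k$, hence $\gtrsim n^\gamma$ distinct internal pair-nodes; then the probability over $\cT$ that $i\sim j$ is at most $(1-2b(1-b))^{n^\gamma}$, super-polynomially small, and the union bound over the $\binom{n}{2}$ pairs is trivial. You also do not mention the paper's reduction from accessible transition structures to plain random $k$-maps (Proposition~\ref{prop.afterp5p4}, losing only an $n^{1/2}$ factor via the concentration of the size of the accessible part); this reduction is what lets one run the BFS argument cleanly without the accessibility conditioning. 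As sketched, your approach to (4) would not yield the needed bound.
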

This is the theorem we will ultimately prove.

A collection of related, more explicit probabilistic statements is the
following
\begin{proposition}
\label{prop.motifaver}
For M-motifs $M$, and three-state $M$-motifs $M^{(3)}$, the average
number of occurrences in uniform random transition structures is
given by
\begin{align}
\label{eq.543254325a}
\bbm[M]
&=
\frac{1}{2}
n^{-k+2}
{\w_k}^k
\; \big( 1+o(1) \big)
\ef;
\\
\bbm[M^{(3)}]
&=
\frac{1}{6}
n^{-2k+3}
{\w_k}^{2k}
\; \big( 1+o(1) \big)
\ef.
\label{eq.543254325b}
\end{align}
Given that there are no three-state M-motifs, 
the average number of $r$-uples $(M_1,\ldots,M_r)$ of distinct
M-motifs is given by
\be
\frac{1}{r!}
\bbm \big[(M_1,\ldots,M_r) \big]
=
\frac{1}{r!}
\left(
\frac{1}{2}
n^{-k+2}
{\w_k}^k
\; \big( 1+o(1) \big)
\right)^r
\ef.
\label{eq.5544765}
\end{equation}
\end{proposition}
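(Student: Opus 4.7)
My plan is to reduce each expected count to a ratio of constrained enumerations of accessible transition structures, and to evaluate these ratios via the asymptotic enumeration of accessible $k$-maps underlying \cite{kor78, BN07, Leb}. By the $S_{n-1}$-symmetry of the uniform measure on $\cA_{n,k}$ permuting non-initial states,
\[
\bbm[M] = \binom{n}{2}\, \ppp\big(\pphi_\alpha(1)=\pphi_\alpha(2),\ \forall\alpha\in[k]\big)\,(1+o(1)),
\]
with an $o(1)$ correction absorbing the corner case in which the initial state $q_0=1$ belongs to the chosen pair. Writing $\cA^M_{n,k}\subseteq \cA_{n,k}$ for the subset satisfying the $k$ transition identities, this probability equals $|\cA^M_{n,k}|/|\cA_{n,k}|$. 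Expressing it as a ratio over all $k$-maps (not necessarily accessible), one gets an automatic $n^{-k}$ from the constraints, leaving the ratio of accessibility probabilities
\[
\frac{\ppp_{\cT}\big(\text{accessible}\mid \pphi_\alpha(1)=\pphi_\alpha(2),\ \forall\alpha\big)}{\ppp_{\cT}\big(\text{accessible}\big)} = \omega_k^k\,(1+o(1))
\]
as the non-trivial quantity to evaluate.

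To extract this $\omega_k^k$ factor I would invoke the Lagrange-inversion / saddle-point representation of $|\cA_{n,k}|$ of \cite{Leb}, whose saddle point is governed by the equation $-k\omega_k=\ln(1-\omega_k)$. The M-motif constraint corresponds, in that picture, to merging the outgoing transition rows of states $1$ and $2$ into a single common row of $k$ edges; the saddle-point value at the constrained evaluation then differs from the unconstrained one by exactly one factor of $\omega_k$ per merged target edge, yielding (\ref{eq.543254325a}). The three-state formula (\ref{eq.543254325b}) is identical in structure, with three rows merged rather than two, producing $\binom{n}{3}n^{-2k}\omega_k^{2k}(1+o(1))$. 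For the factorial moment (\ref{eq.5544765}), the hypothesis forbidding three-state M-motifs forces the $r$ pairs $\{i^a,j^a\}$ to be pairwise disjoint, hence to occupy $2r$ distinct states; the number of ordered $r$-tuples of unordered disjoint pairs is $n!/[(n-2r)!\,2^r]\sim n^{2r}/2^r$, and repeating the saddle-point evaluation with $r$ simultaneous mergers produces the factorised correction $\omega_k^{rk}$ at leading order, so that dividing by $r!$ for the ordering of the $r$-tuple gives the claim.

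The main obstacle will be the uniform control of the $\omega_k^{rk}$ correction under simultaneous multi-motif conditioning: showing that cross-correlations between distinct motif constraints, in particular those with overlapping target sets $\{\ell_\alpha^a\}$, contribute only to the $1+o(1)$, and do so uniformly as $r$ is allowed to grow. This decorrelation is precisely the rigorous version of the Poisson heuristic underpinning Proposition~\ref{prop.enums}(i), and requires careful tracking of the error terms in the multivariable Lagrange-inversion analysis.
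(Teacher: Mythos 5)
Your route is genuinely different from the paper's, even though both turn on the same constant $\omega_k$. You reduce $\bbm[M]$, via the permutation symmetry among non-initial states, to the motif probability at a fixed pair, peel off the trivial $n^{-k}$, and propose to extract the remaining $\omega_k^k$ from the Lagrange-inversion saddle-point representation of $|\cA_{n,k}|$ in \cite{Leb}, interpreting the constraint as a row merger that shifts the saddle. The paper instead works entirely inside the $k$-Dyck tableau encoding of \cite{BN07}: conditioning on the motif is the clean combinatorial operation of erasing the $k$ (resp.\ $2k$, resp.\ $kr$) columns associated with the dominated states, and the expected indicator becomes a ratio of tableau cardinalities $|\cT[(N-kr)\times n;\hat f^{j_1,\ldots,j_r}]|\big/|\cT[N\times n;\hat f^{\emptyset}]|$. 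The $(\omega_k/n)^{kr}$ factor then falls out of two auxiliary facts the paper proves separately: the Stirling-number ratio asymptotics of Proposition~\ref{prop.stirlfugac}, and the boundary-insensitivity of the $k$-Dyck fraction in Proposition~\ref{prop.quasidyck}. The tableau route buys two things your sketch leaves implicit: the conditioning is exact and local (no need to manipulate conditional accessibility probabilities over $k$-maps), and the range restriction $\ell\gg\sqrt n$ in Proposition~\ref{prop.quasidyck} gives explicit control over states whose BFS rank is too close to $1$ or $n$, with the out-of-range tuples absorbed into $P_{\mathrm{confl}}$.

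The substantive gap, which you rightly flag, is that the assertion ``one factor of $\omega_k$ per merged target edge, uniformly in $r$'' is exactly the analytic content that must be proved; in the paper it is Propositions~\ref{prop.stirlfugac} and~\ref{prop.quasidyck}, and in your picture it would require carrying out the shifted saddle-point evaluation of \cite{Leb} under $r$ simultaneous mergers with error terms uniform in $r$, which you do not do. A secondary issue: the $S_{n-1}$-symmetry step only deals with the accident $q_0\in\{i,j\}$, but the real boundary effect lives in the BFS ordering (states visited very early or very late), which your reduction silently relabels away and would need to be reinstated and bounded, as the paper does when it notes that tuples with some $j_a$ out of range contribute only $o(1)$ and folds them into $P_{\mathrm{confl}}$.
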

The proof of this proposition is postponed to Section \ref{sec.withProofPrare}.

Equation (\ref{eq.543254325a}) proves
$\rho = c_k n^{-k+2} \, \big( 1+o(1) \big)$, that is, Part~2 of
Proposition~\ref{prop.enums}.  Using the first-moment bound, 
equation (\ref{eq.543254325b}) proves
$P_{\rm confl}=\mathcal{O}(n^{-k+1})$ as required for Part~3 of
Proposition~\ref{prop.enums}.

The result in (\ref{eq.5544765}) concerning higher moments of M-motifs
implies the proof of convergence of $P(r)$ to a Poissonian, Part~1 of
Proposition~\ref{prop.enums}.  The idea behind this claim is the fact
that the occurrence of a M-motif with given states $\{i,j\}$ (and any
$k$-uple $\{\ell_{\a}\}$) is a `rare' event, as it has a probability
$\sim n^{-k}$, and, as the motifs are `small' subgraphs, involving
$\mathcal{O}(1)$ vertices, and parts of the transition structure
$\Pphi$ far away from each other (in the sense of distance on the
graph) are weakly correlated, we expect the ``Poisson Paradigm'' to
apply in this case, as discussed, for example, in Alon and Spencer
\cite[ch.\;8]{AlonSpe}.
A rigorous proof of this phenomenon can be
achieved using the strategy called \emph{Brun's sieve} (see
e.g.\ \cite[sec.\;8.3]{AlonSpe}). The verification of the hypotheses
discussed in the mentioned reference is exactly the statement of
equation~(\ref{eq.5544765}).

Thus, assuming Proposition \ref{prop.motifaver}, there is a single
missing item in our `checklist', namely, Part~4 of
Proposition~\ref{prop.enums}. We need to determine that
$P_{\rm rare}=o(n^{-k+2})$. The idea behind this is that, in absence
of M-motifs, with probability $1-o(n^{-k+2})$, for all pairs of states
$(i,j)$, the simultaneous breadth-first search trees started from $i$
and $j$ visit almost surely a large number of distinct states (for
our proof, it would suffice $\sim -\frac{\ln{n}}{\ln(1-2b(1-b))}$, but
it will turn out to be provably at least $\sim n^{\frac{1}{4(k+1)}}$
and in fact conjecturally $\mathcal{O}(n)$). Thus, as, for all the
pairs of homologous but distinct states, the states need to be either
both or none terminal states, this produces a factor $1-2b(1-b)$ per
pair.


Note that we need only an upper bound on $P_{\rm rare}$ (and no lower
bound), and we have some freedom in producing bounds, as, at a
heuristic level, we expect $P_{\rm rare} = \mathcal{O}(n^{-k+1}) \ll
o(n^{-k+2})$.  Our proof strategy will exploit this fact, and the
following property of accessible transition functions (see
\cite{cn12}): given a random $k$-map 
$\Pphi=\{\pphi_{\a}(i)\}_{1 \leq i \leq n, 1 \leq \a \leq k}$, the
number of states accessible from state $1$ is a random variable
$m=m(n,k)$, with average $\Theta(n)$ and probability around the modal
value\footnote{I.e., the most probable value.} of order
$n^{-\frac{1}{2}}$. Remarkably, given that the accessible part has
size $m$, then the induced transition structure is sampled uniformly
among all transition structures of size $m$.

This has a direct simple consequence: if the average number of
occurrences of a family of events on a random $k$-map is
$\bbm[\{\cE_i\}]_{\textrm{$k$-maps}} = \mathcal{O}(n^{-\gamma})$, then the
same average over random accessible transition functions of fixed
size is bounded as
$\bbm[\{\cE_i\}]_{\rm acc.} \leq \mathcal{O}(n^{-\gamma+\frac{1}{2}})$.
Actually, this bound is very generous and, if needed (but this is not
our case), the extra exponent $\frac{1}{2}$ could be dumped
significatively with some extra effort.

Thus, instead of proving that $P_{\rm rare} = o(n^{-k+2})$, we will
define the quantity $P'_{\rm rare}$, exactly as $P_{\rm rare}$ but on
random $k$-maps over $n$ states. Note that the definition of $P_{\rm rare}$ and
$P'_{\rm rare}$ is based on two notion: not containing certain motifs,
and not presenting pairs of Myhill-Nerode-equivalent states,
and that both this notions are not confined to accessible
automata, but are well-defined also for maps which are not accessible.
Then we will prove that 
\begin{proposition}
\label{prop.afterp5p4}
$P'_{\rm rare} =
o(n^{-k+\frac{3}{2}})$.
\end{proposition}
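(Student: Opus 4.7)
The plan is to union-bound over pairs $(i,j)$ with $i\neq j$, splitting the analysis according to the size of a ``synchronization closure''. For each such pair, define $V_{ij}\subseteq [n]^2$ as the smallest set of off-diagonal pairs containing $(i,j)$ such that, for every $(p,q)\in V_{ij}$ and every $\a\in [k]$ with $\pphi_\a(p)\neq\pphi_\a(q)$, one has $(\pphi_\a(p),\pphi_\a(q))\in V_{ij}$. Let $U_{ij}$ be the set of states appearing in $V_{ij}$, and $G_{ij}$ the graph on $U_{ij}$ whose edges are the unordered pairs of $V_{ij}$. By the definition of Myhill--Nerode equivalence, $i\sim j$ forces $p\sim q$ for every $(p,q)\in V_{ij}$, so $\cT$ must be constant on each connected component $C$ of $G_{ij}$. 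Using the elementary inequality $b^s+(1-b)^s\leq\max(b,1-b)^{s-1}$ (valid for $s\geq 1$ and any $b\in [0,1]$) together with the fact that all components of $G_{ij}$ have size $\geq 2$,
\begin{equation*}
\mathbb{P}_\cT\bigl[\,i\sim j\,\big|\,\Pphi\,\bigr]\;\leq\;\prod_{C}\bigl(b^{|C|}+(1-b)^{|C|}\bigr)\;\leq\;\gamma^{\,|U_{ij}|-c(G_{ij})}\;\leq\;\gamma^{\,|U_{ij}|/2},
\end{equation*}
where $\gamma=\max(b,1-b)<1$ and $c(G_{ij})$ is the number of components.

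Fix a threshold $m=\lceil K\log n\rceil$ with $K>2(k+1)/\log(1/\gamma)$, and decompose
\begin{equation*}
P'_{\rm rare}\;\leq\;\sum_{i<j}\mathbb{P}\bigl[\,|U_{ij}|\geq m,\ i\sim j\,\bigr]\;+\;\sum_{i<j}\mathbb{P}\bigl[\,|U_{ij}|<m,\ \text{no M-motif}\,\bigr].
\end{equation*}
Each summand in the first sum is at most $\gamma^{m/2}\leq n^{-(k+1)}$ by the bound above, so that sum is $o(n^{-k+3/2})$.

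The second sum is the heart of the proof. For a support $U\supseteq\{i,j\}$ of size $|U|=s<m$, the event $U_{ij}\subseteq U$ combined with ``no M-motif'' forces, for every pair $(p,q)\in V_{ij}$ and every $\a\in [k]$, the image $(\pphi_\a(p),\pphi_\a(q))$ to lie either in $V_{ij}$ or on the diagonal, with not all $k$ letters producing a diagonal image. The dominant case is $s=2$, where $V_{ij}=\{(i,j),(j,i)\}$: for each $\a$, $(\pphi_\a(i),\pphi_\a(j))$ must lie in $\{(i,j),(j,i)\}\cup\{(r,r):r\in [n]\}$, giving probability $(n+2)/n^2$ per letter; subtracting the all-diagonal case (which would make $(i,j)$ itself an M-motif) yields per-pair probability $((n+2)^k-n^k)/n^{2k}=O(n^{-k-1})$, for a total contribution of order $n^{-k+1}$, well below $n^{-k+3/2}$. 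For each $s\geq 3$, a similar enumeration --- fixing first the abstract isomorphism type of $V_{ij}$ (a finite list indexed by $k$ and $s$) and then counting $O(n^{s-2})$ embeddings of the remaining vertices of $U$ into $[n]$ --- yields a contribution dominated by the $s=2$ term, and summation over $s\leq m$ preserves the target bound.

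The principal obstacle lies in the enumeration of the second sum. ``Diagonal relaxations'' --- letters for which $\pphi_\a(p)=\pphi_\a(q)$, allowing the common value to be anywhere in $[n]$ --- contribute a factor of $n$ per relaxation and must be balanced carefully against the structural constraints they impose on the closure. A clean organization proceeds by fixing first the abstract isomorphism type of the pair-transition multigraph on $V_{ij}$, and only then counting embeddings of its vertices into $[n]$; for each $s$, the number of admissible types is bounded by a quantity depending only on $k$ and $s$. The exclusion of M-motifs, which removes the leading-order configuration at $s=2$, is exactly what supplies the extra $1/n$ margin that brings the naive threshold $n^{-k+2}$ down to the target $o(n^{-k+3/2})$.
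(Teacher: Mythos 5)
Your approach is genuinely different from the paper's. The paper runs a simultaneous lexicographic breadth-first exploration from $i$ and $j$, first conditioning away sink, sink-pair, and quasi-sink structures (whose contribution is controlled directly), then bounding the probability that fewer than $k+1$ ``leaf'' steps occur before depth $n^\gamma$, and finally exploiting the resulting $\gtrsim n^\gamma$ fresh constrained pairs to get a superpolynomial factor $(1-2b(1-b))^{n^\gamma}$. You instead introduce the synchronization closure $V_{ij}$, prove $\mathbb{P}_{\cT}[i\sim j\mid\Pphi]\le\gamma^{|U_{ij}|/2}$, and split on whether $|U_{ij}|$ is above or below a logarithmic threshold. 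The large-closure half of your argument is clean and correct: the bound $\gamma^{|U_{ij}|/2}\le n^{-(k+1)}$ with $m=K\log n$ and $K>2(k+1)/\log(1/\gamma)$ is right, and the union bound over $\binom n2$ pairs gives $O(n^{-k+1})=o(n^{-k+3/2})$. Your $s=2$ computation $((n+2)^k-n^k)/n^{2k}=O(n^{-k-1})$ is also correct, and the exclusion of the all-diagonal configuration by the no-M-motif hypothesis is exactly the right move.

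The gap is in the $s\ge 3$ enumeration, and it is not a cosmetic one. You write that fixing the isomorphism type of $V_{ij}$ and then counting $O(n^{s-2})$ embeddings ``yields a contribution dominated by the $s=2$ term'' --- but this is an assertion, not a proof, and you acknowledge it yourself as ``the principal obstacle.'' The difficulty is real: the number of possible closure structures on an $s$-element support is not $O(1)$; a crude bound is $2^{s(s-1)}$, and even restricting to reachable structures one gets something like $e^{O(s\log s)}$ (at least tree-like configurations) up to $e^{O(s^2)}$. Meanwhile the probability per embedding, estimated via a spanning forest of $G_{ij}$, is roughly $n^{(c-s)k}(s+1)^{(s-c)k}$. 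Multiplying by $n^{s-2}$ embeddings and $n^2$ choices of $(i,j)$ and summing over $s\le m=\Theta(\log n)$, the combinatorial factor $(s+1)^{(s-1)k}$ alone is of order $e^{\Theta(k\log n\cdot\log\log n)}$ when $s\sim\log n$, which is superpolynomial and overwhelms a single factor $n^{-(k-1)}$ gained per increment in $s$. So the naive sum over isomorphism types does not obviously converge, and the ``diagonal relaxations'' you flag (each contributing a factor $n$) make the accounting still more delicate. To close the gap you would need either a sharper structural classification of possible closures at each $s$ (to replace $e^{O(s^2)}$ by something polynomial in $s$), or an exploration-process / martingale-type argument that tracks the growth of $U_{ij}$ step by step rather than summing over final shapes. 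The latter is essentially what the paper's BFS argument does: by bounding $\proba(v_{k+1}\le v)\le\frac1{(k+1)!}\bigl(\frac{v(v+1)}{n-2v}\bigr)^{k+1}$ directly, it avoids enumerating motif shapes altogether, at the cost of the careful preliminary conditioning on sinks and quasi-sinks. In short: the skeleton of your argument is sound and the two halves you did complete are correct, but the small-closure enumeration for $3\le s\le m$ is missing and its naive version fails; as written, this is a genuine gap.
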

In summary, as this proposition implies Part~4 of
Proposition~\ref{prop.enums}, Proposition \ref{prop.motifaver} implies
Parts 1 to 3 of Proposition~\ref{prop.enums}, and
Proposition~\ref{prop.enums} implies our main Theorem
\ref{theo.fullshort}, providing proofs of 
Propositions \ref{prop.motifaver} and \ref{prop.afterp5p4} is
sufficient at our purposes. This task is fulfilled in the following
sections.


\section{Proofs of the lemmas}
\label{sec.proofs}

\noindent
{\it Proof of Proposition~\ref{prop.afterp5p4}.}
In a $k$-map, we say that a state $i$ is a \emph{sink state} if
$\pphi_{\a}(i)=i$ for all $\a$. We say that two states $\{i,j\}$ form a
\emph{sink pair} if the set
\[
N_{ij} = \{i,j,\pphi_1(i),\pphi_1(j),\cdots,\pphi_k(i),\pphi_k(j)\}
\]
has cardinality $k+1$ or smaller. As easily seen through first-moment
bound, the probability of having any sink state or sink pair in a
random $k$-map is at most of order $n^{-k+1}$ (precisely, the overall
constant is bounded by $1+\frac{(k+1)^{2k}}{2(k-1)!}$). So, at the aim
of proving that $P'_{\rm rare} = o(n^{-k+\frac{3}{2}})$, we can
equivalently conditionate the $k$-map not to contain any sink motif.

We say that two states $\{i,j\}$ form a \emph{quasi-sink pair} if the
set $N_{ij}$
has cardinality $k+2$. The average number of quasi-sink pairs in a
random $k$-map is of order $n^{-k+2}$, thus this case must be analysed
at our level of accuracy.

There exist three families of quasi-sink pairs:
those producing a M-motif, those such that there exists a value $\a$
such that $\{i,j,\pphi_{\a}(i),\pphi_{\a}(j)\}$ are all distinct
(type-1), and those such that for $h$ letters of the alphabet
$\pphi_{\a}(i)$ is uniquely realized in $N_{ij}$, and for the remaining
$k-h$ letters $\pphi_{\a}(j)$ is uniquely realized in $N_{ij}$
(type-2). 
In evaluating $P'_{\rm rare}$, we have excluded the M-motif
case, and we are left only with type-1 and type-2
quasi-sinks. Furthermore, we have excluded sink states, so in type-2
quasi-sinks we must have both $h$ and $k-h$ non-zero.

For a type-1 quasi-sink $\{i,j\}$, define the 
\emph{pair following $\{i,j\}$} as the pair $\{i',j'\}$ such that
$i'=\pphi_{\a}(i)$, $j'=\pphi_{\a}(j)$, for $\a$ the first lexicographic
letter such that $\{i,j,\pphi_{\a}(i),\pphi_{\a}(j)\}$ are all distinct.
For a type-2 quasi-sink $\{i,j\}$
define the \emph{pair following $\{i,j\}$} as the pair $\{i',j'\}$
with $i'=\pphi_{1}(i)$, $j'=\pphi_{1}(j)$.  Again, by first-moment
estimate, the probability that there exists a quasi-sink pair
$\{i,j\}$, such that also the pair following it is a quasi-sink, is
bounded by $\mathcal{O}(n^{-k+1})$ (use at this aim that $h(k-h)>0$ in
a type-2 quasi-sink), and we can conditionate our $k$-map not to
contain such motifs. If $\{i,j\}$ is a quasi-sink pair, a necessary
condition for $i \sim j$ is that also $i' \sim j'$. Thus, we can bound
$P'_{\rm rare}$ by the probability that there exist no non--quasi-sink
pairs in the $k$-map. This is the formulation of the problem that we
ultimately address.

Consider a non--quasi-sink pair $\{i,j\}$, and construct the
lexicographic breadth-first tree exploration, simultaneously on the
two states $i$ and $j$, neglecting those branches in which, in one or
both of the two trees, there is a state already visited by the
exploration (call \emph{leaves} these nodes).

Call $(v_1, v_2, \ldots)$ the ordered sequence of steps in the
breadth-first search, at which a leaf node is visited.  For fixed
values $v$ and $h$, we want to determine the probability of the event
$v_{h} \leq v$, conditioned to the event that the list has at least
$h$ items.  By standard estimate of factorials, and crucially making
use of the exclusion of sink and quasi-sink motifs, it can be proved
for this quantity
\be
\begin{split}
\proba(v_h \leq v)
&
\leq
\frac{1}{h!}
\left(
\frac{v(v+1)}{n-2v}
\right)^{h}
\ef.
\end{split}
\label{eq.detached28}
\end{equation}
Set now $h=k+1$. By definition, in a non--quasi-sink pair, we
certainly have at least $k+1$ entries $v_j$.  If
$v=\mathcal{O}(n^{\gamma})$ for some $0<\gamma<1$, we have that for
each non--quasi-sink pair $\{i,j\}$
\be
\proba(v_{k+1}^{(ij)} \leq v)
\leq
\mathcal{O}(n^{-(k+1)(1-2\gamma)})
\ef.
\end{equation}
The number of non--quasi-sink pairs is bounded by $\binom{n}{2}$, thus
by first-moment bound
\be
\proba(v_{k+1}^{(ij)} \leq v \textrm{~for all $\{i,j\}$})
\leq
\mathcal{O}(n^{-k+1+2\gamma(k+1)})
\ef.
\end{equation}
For $\gamma < \frac{1}{4(k+1)}$ we thus get 
$ \proba(v_{k+1}^{(ij)} \leq v \textrm{~for all $\{i,j\}$}) \leq
o(n^{-k+\frac{3}{2}}) $
as needed. Thus, we know that, with probability larger than
$1-o(n^{-k+\frac{3}{2}})$, all the non--quasi-sink pairs in our
$k$-map have $v_{k+1} \gtrsim n^{\gamma}$, for any $\gamma <
\frac{1}{4(k+1)}$. This means that, if we truncate the breadth-first
search tree exploration to a depth $\sim \gamma \frac{\ln n}{\ln k}$,
we have at most $k$ leaves in the tree. Thus, for all the trees, we
have at least $\sim n^{\gamma}$ internal nodes, i.e.\ pairs of states
$(i',j')=(\pphi(i,u),\pphi(j,u))$, 
for which it is required
$\cT(i')=\cT(j')$ for $i \sim j$. But, as all these states appear not
repeated in the exploration, the probability that $i \sim j$ is
bounded by an exponential of the form $(1-2b(1-b))^{n^{\gamma}}$,
which decreases faster than any power law. The overall factor
$\binom{n}{2}$ from the first-moment bound is irrelevant, and we are able
to conclude that $P'_{\rm rare} = o(n^{-k+\frac{3}{2}})$, as needed.
Note that this proof works not only for finite values of $b$ in the
open interval $]0,1[$ (as required for our purposes), but even up to 
$b \sim n^{-\gamma}$. 
\qed


\label{sec.withTabDef}

Before passing to the proof of Proposition \ref{prop.motifaver}, we
need to recall the relation between accessible deterministic complete
automata and combinatorial objects known as \emph{$k$-Dyck tableaux}
\cite{BN07}, and determine a collection of statistical properties of
these tableaux.


Given the integers $M$ and $n$,
a \emph{tableau} $T$ in the set $\cT[M \times n]$ is a map
from $[M]$ to $[n]$
such that:
\begin{enumerate}
\item
every value $y \in [n]$ has at least one preimage;
\item
calling $x_T(y)$ the smallest preimage, 
we have $x_T(1) < x_T(2) < \ldots < x_T(n)$.  
\end{enumerate}
The tableau $T$ may be
represented graphically, on a $M \times n$ grid, by marking the $M$
pairs $\{ (x,T(x)) \}_{1 \leq x \leq M}$.  Then the conditions above
translate as follows. There is exactly one marked entry per
column. Mark in red the pairs $(x_T(y),y)$, and in black the remaining
ones: there is exactly one red entry per row, which is at the left of
all black entries in the same row (if any), and the polygonal line
connecting the red entries in sequence is monotonically increasing. We
call the collections of positions of red and black marks respectively
the \emph{backbone} $B_T$ and \emph{wiring part} $W_T$ of the tableau
$T$. It is easily seen that the number of tableaux in 
$\cT[M \times n]$ is given by the 
\emph{Stirling number of second type}, $\left\{ \atopp{M}{n}
\right\}$, i.e., the number of ways of partitioning $M$ elements into
$n$ non-empty blocks (see e.g.\ \cite[sec.\;6.1, 7.4]{GKP}). The
asymptotic evaluation of $\left\{ \atopp{M}{n} \right\}$, for $n$
large and $M/n = \mathcal{O}(1)$, can be done through the general
methods of analytic combinatorics (see e.g.\ \cite{flaj}, and in
particular \cite{good} for this specific problem). A result of this
calculation that we shall need is the following

\begin{proposition}
\label{prop.stirlfugac}
If $M, M' = \kappa n + \mathcal{O}(1)$, with $\kappa>1$, calling $\w$
the only solution of the equation $-\kappa \w = \ln (1-\w)$ in $[0,1]$,
\be
\left\{ \atopp{M}{n} \right\}
=
\left\{ \atopp{M'}{n} \right\}
\left(
\frac{n}{\w}
\right)^{M-M'}
\; \big( 1+o(1) \big)
\ef.
\end{equation}
\end{proposition}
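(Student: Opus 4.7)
The plan is to use the exponential generating function representation $\left\{ \atopp{M}{n} \right\} = \frac{M!}{n!}\,[z^M](e^z-1)^n$ combined with the Cauchy coefficient formula and a saddle-point analysis. Writing $[z^M](e^z-1)^n = \frac{1}{2\pi i}\oint z^{-1} e^{n\phi(z)}\,dz$ with $\phi(z) = \ln(e^z-1) - \kappa\ln z$, the saddle equation $\phi'(z^*)=0$ reads $z^* e^{z^*} = \kappa(e^{z^*}-1)$. Substituting $\w = 1 - e^{-z^*}$, $z^* = \kappa\w$ converts this into $-\kappa\w = \ln(1-\w)$, which is exactly the equation defining $\w$ in the statement. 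Hence the relevant saddle sits at $z^* = \kappa\w$; its uniqueness in $(0,\infty)$ for $\kappa>1$ is standard.

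Next, one deforms the contour to pass through $z^*$ along the direction of steepest descent and applies the usual Gaussian estimate, obtaining a formula of the form $\left\{ \atopp{M}{n} \right\} = \frac{M!}{n!}\cdot\frac{e^{n\phi(z^*)}}{z^*\sqrt{2\pi n\,\phi''(z^*)}}\,(1+o(1))$. The crucial point, in view of the statement, is that since $M$ and $M'$ both lie in $\kappa n + \mathcal{O}(1)$, the effective densities $M/n$ and $M'/n$ agree up to $\mathcal{O}(1/n)$, so the saddle $z^*$, the value $\phi(z^*)$, and the curvature $\phi''(z^*)$ are shared up to $\mathcal{O}(1/n)$ perturbations. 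Because $\phi$ is smooth in a fixed neighbourhood of $z^*$ (which stays bounded away from the singularities at $z=0$ and $z=\infty$), those perturbations only contribute a $(1+o(1))$ multiplicative factor to the ratio we wish to evaluate.

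The ratio $\left\{ \atopp{M}{n} \right\}/\left\{ \atopp{M'}{n} \right\}$ therefore reduces, up to $(1+o(1))$, to the product of only two surviving contributions: the ratio of factorials, $M!/M'! = M^{M-M'}(1+o(1)) = (\kappa n)^{M-M'}(1+o(1))$ since $M-M'=\mathcal{O}(1)$, and the $z^{-M}$ versus $z^{-M'}$ factor evaluated at the saddle, which is $(z^*)^{-(M-M')} = (\kappa\w)^{-(M-M')}$. Multiplying these two gives $(n/\w)^{M-M'}$, matching the claim.

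The main technical obstacle is making the saddle-point analysis rigorous and uniform across the $\mathcal{O}(1)$ window of $M-M'$, i.e. controlling the tails of the Cauchy integral outside a shrinking neighbourhood of $z^*$ and bounding the error from replacing $\phi$ by its quadratic expansion. However, this is entirely standard in analytic combinatorics; indeed, a full single-variable asymptotic formula for $\left\{ \atopp{M}{n} \right\}$ in this regime is already available in \cite{flaj,good}, as noted in the excerpt, so one may either quote such a formula and take the ratio, or run the saddle analysis directly as sketched above.
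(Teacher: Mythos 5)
The paper does not actually prove this proposition: it states it and refers to \cite{flaj,good} for the asymptotics of Stirling numbers of the second kind at linear density, so your saddle-point derivation is filling in a proof the paper leaves implicit. Your route is the natural one and the conclusion is correct: the saddle equation $z^* e^{z^*}=\kappa(e^{z^*}-1)$ rewritten via $\omega=1-e^{-z^*}$, $z^*=\kappa\omega$ indeed becomes $-\kappa\omega=\ln(1-\omega)$, and the ratio $\frac{M!}{M'!}\,(z^*)^{-(M-M')}=(\kappa n)^{M-M'}(\kappa\omega)^{-(M-M')}=(n/\omega)^{M-M'}$ gives the claimed factor.

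One step in the middle is phrased in a potentially misleading way. You say that since $M/n$ and $M'/n$ differ by $\mathcal{O}(1/n)$, the saddle $z^*$, the value $\phi(z^*)$, and the curvature $\phi''(z^*)$ are each perturbed by $\mathcal{O}(1/n)$, and that these perturbations contribute only a $(1+o(1))$ factor. For $z^*$ and $\phi''(z^*)$ that is fine, but $\phi(z^*)$ sits in the exponent as $e^{n\phi(z^*)}$, so an $\mathcal{O}(1/n)$ shift in $\phi(z^*)$ gives an $e^{\mathcal{O}(1)}$, not a $(1+o(1))$, multiplicative change; this $\mathcal{O}(1)$ correction is exactly where the $(z^*)^{-(M-M')}$ factor lives. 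You then recover it correctly in the next paragraph, but the argument is cleaner if you either (i) keep $\kappa$ fixed in $\phi$ (so the integrand is $z^{-(M-\kappa n)-1}e^{n\phi(z)}$ and the $(z^*)^{-(M-M')}$ factor is explicit and $\phi(z^*)$ does not move at all), or (ii) run the contour for both $M$ and $M'$ through the same $z^*=\kappa\omega$, observe that the resulting linear term in the local expansion is $\mathcal{O}(1)\cdot(z-z^*)$ and hence contributes $\mathcal{O}(n^{-1/2})$ in the exponent over the Gaussian window of width $n^{-1/2}$, or (iii) use the stationarity $\phi_M'(z^*_M)=0$ (envelope) to show $n\phi_M(z^*_M)-n\phi_{M'}(z^*_{M'})=-(M-M')\ln z^*+o(1)$. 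Any of these yields your final two-factor decomposition rigorously.

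It is also worth noting that the paper's appendix arrives at the same constant $\omega$ from a rather different angle: $\omega/n$ appears there as a Lagrange multiplier (a per-cell fugacity) in a product measure $\mu_\omega(\bc)=\prod_y(\omega y)^{c_y}$ over backbone increments, with the same transcendental equation fixing $\omega$, which is why the proposition is labelled ``stirlfugac''. Your saddle parameter $z^*=\kappa\omega$ and the appendix's multiplier are two faces of the same tilting; the $(n/\omega)^{M-M'}$ factor is the inverse fugacity cost per added column, consistent with both pictures.
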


\begin{figure}[tb]
\[
\includegraphics[scale=0.72]{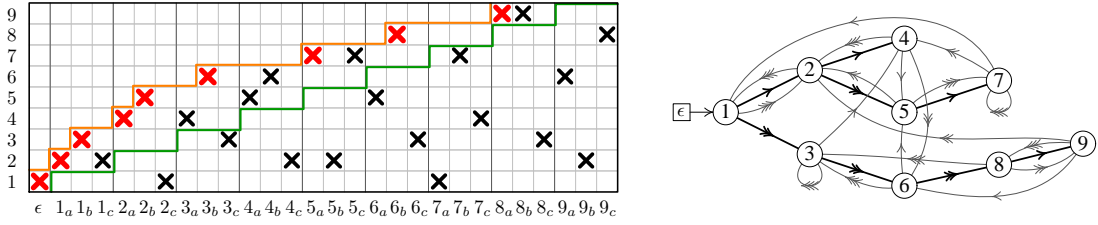}
\]
\caption{\label{fig.tabex}Left: a tableau with $n=9$ and $k=3$. The
  backbone part is in red.  This tableau is valid because the red
  entries are monotonic (as shown by the orange profile), and $k$-Dyck
  because they are all on the left of the green staircase line. Right:
  the associated $k$-map. Backbone edges, corresponding to the
  breadth-first search tree, are thick, and wiring edges are in gray.}
\end{figure}

\noindent
For a fixed integer $k$, when $M=N(n,k)=kn+1$, we have a special
subfamily of tableaux in $\cT[N \times n]$.  A tableau is
\emph{$k$-Dyck} if $x_T(\ell) \leq k (\ell-1) + 1$, i.e.\ if the
backbone cells lie above the line of slope $1/k$ containing the origin
of the grid. A small example of $k$-Dyck tableau is shown in Figure
\ref{fig.tabex}.

There exists a canonical bijection between $k$-Dyck tableaux and
transition structures $\Pphi$ of accessible deterministic complete
automata. It suffices to associate the indices $(1,2,\ldots,n)$ of the
states to the rows of the tableaux, and the indices $(\epsilon,
1_1,\ldots,1_k,\cdots,$ $n_1,\ldots,n_k)$ of the oriented edges of $\Pphi$
to the columns. Then, for $x=i_{\a}$, the entry $(x,y)$ is marked in
$T$ if and only if $\pphi_{\a}(i)=y$, and it is part of the backbone if
and only if it is part of the breadth-first search tree on $\Pphi$
started at the initial state.

Given a function $\hat{f}(y): [n] \to [M]$, consider the restriction
of the set $\cT[M \times n]$ to tableaux $T$ in which the
backbone function $x_T(y)$ is dominated by $\hat{f}$, i.e., such that
$x_T(y) \leq \hat{f}(y)$ for all $1\leq y \leq n$.  Call
$\cT[M \times n; \hat{f}]$ this set.
Our $k$-Dyck tableaux correspond to the special case
$\cT[N \times n; \hat{f}^{\emptyset}]$, with 
$\hat{f}^{\emptyset}(y) := N-k(n-y+1)$.  A required technical lemma,
that we state without proof, is the following

\begin{proposition}
\label{prop.quasidyck}
Take an integer $n$, $N=\mathcal{O}(n)$, $B=\mathcal{O}(1)$, and $\ell
\gg \sqrt{n}$.
Let $M = N - B$, and take a function $\hat{f}$ such that
$\hat{f}(y) = \hat{f}^{\emptyset}(y)$
for all $y \leq \ell$,
$\hat{f}(y) = \hat{f}^{\emptyset}(y)-B$
for all $y \geq n-\ell$,
and 
$\hat{f}^{\emptyset}(y)-B \leq \hat{f}(y) \leq \hat{f}^{\emptyset}(y)$
for all $y$. Then
\be
\frac{
\big| cT[M \times n; \hat{f}] \big|
}
{\big| cT[M \times n] \big|}
-
\frac{
\big| cT[N \times n; \hat{f}^{\emptyset}] \big|
}
{\big| cT[N \times n] \big|}
=
o(1)
\ef.
\end{equation}
\end{proposition}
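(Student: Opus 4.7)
The plan is to reduce the statement to a ratio of constrained Dyck counts via a column-append bijection, evaluate that ratio in closed form with a pure-boundary auxiliary, and control a residual bulk correction using backbone concentration.

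First, I would observe that appending $B$ wiring columns---each freely labeled by a row in $[n]$---to a tableau $T' \in \cT[M \times n; \hat{f}]$ yields a tableau in $\cT[N \times n; \hat{f}]$ with the same backbone; conversely, since $\hat{f}(n) = \hat{f}^{\emptyset}(n) - B = N - k - B \leq M - k$, every tableau in $\cT[N \times n; \hat{f}]$ has its backbone entirely within the first $M$ columns, so its truncation is well-defined. This yields the bijective identity $|\cT[N \times n; \hat{f}]| = n^B |\cT[M \times n; \hat{f}]|$. Combined with $|\cT[N \times n]| = (n/\omega_k)^B |\cT[M \times n]|(1 + o(1))$ from Proposition~\ref{prop.stirlfugac}, the statement reduces to
\begin{equation*}
\frac{|\cT[N \times n; \hat{f}]|}{|\cT[N \times n; \hat{f}^{\emptyset}]|} = \omega_k^B + o(1).
\end{equation*}

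Next I would treat the ``pure-boundary'' auxiliary function $\hat{f}^{\mathrm{pb}}$ that agrees with $\hat{f}^{\emptyset}$ at every $y$ except $y = n$, where $\hat{f}^{\mathrm{pb}}(n) = N - k - B$. Decomposing each Dyck tableau by $c = x_T(n)$ and applying the same column-append bijection to the suffix $c+1, \ldots, N$ gives
\begin{equation*}
|\cT[N \times n; \hat{f}^{\emptyset}]| = n^k D_{N-k, n}, \qquad |\cT[N \times n; \hat{f}^{\mathrm{pb}}]| = n^{k+B} D_{N-k-B, n},
\end{equation*}
where $D_{M, n}$ denotes the $M$-column $n$-row standard Dyck count (for $M \leq N_n - k = k(n-1)+1$ the $y=n$ Dyck constraint is automatic from the tableau length). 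Proposition~\ref{prop.stirlfugac}, together with a slow-variation statement for the Dyck fraction $D_{M, n}/|\cT[M \times n]|$ in the window $M = kn + \mathcal{O}(1)$---provable by refining the saddle-point / Lagrange-inversion analysis of~\cite{good}---gives $D_{N-k-B, n}/D_{N-k, n} = (\omega_k/n)^B (1+o(1))$, so the pure-boundary ratio is exactly $\omega_k^B + o(1)$.

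Finally, I would bridge from $\hat{f}^{\mathrm{pb}}$ to $\hat{f}$: the two differ only in that $\hat{f}$ tightens the bound on $x_T(y)$ by at most $B$ at some rows $y \neq n$, with $\hat{f}(y) \in [\hat{f}^{\emptyset}(y) - B, \hat{f}^{\emptyset}(y)]$. For a uniformly random tableau in $\cT[N \times n; \hat{f}^{\emptyset}]$, the backbone concentrates around a deterministic profile $x_T(y) \approx y/\omega_k$ with fluctuations of order $\sqrt{n}$---a macroscopic distance $\Theta(n)$ below the Dyck boundary $\hat{f}^{\emptyset}(y) \sim ky$ for every $y \neq n$. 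A union bound then yields $\Pr[\,x_T(y) \in (\hat{f}(y), \hat{f}^{\emptyset}(y)] \text{ for some } y \neq n\,] = o(1)$, hence $|\cT[N \times n; \hat{f}]| = |\cT[N \times n; \hat{f}^{\mathrm{pb}}]|(1+o(1))$, completing the proof. The main obstacle is establishing the slow-variation and backbone-concentration estimates with sufficient sharpness near the Dyck boundary; both follow from refinements of the analytic combinatorics framework underlying Proposition~\ref{prop.stirlfugac}.
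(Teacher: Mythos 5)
Your column-append reduction is a clean and correct observation: since $\hat{f}(n)=N-k-B=M-k$, every tableau in $\cT[N\times n;\hat{f}]$ has its backbone inside the first $M$ columns, so $|\cT[N\times n;\hat{f}]|=n^B|\cT[M\times n;\hat{f}]|$, and combined with Proposition~\ref{prop.stirlfugac} and the fact that the Korshunov fraction is $\Theta(1)$, the statement reduces to $|\cT[N\times n;\hat{f}]|/|\cT[N\times n;\hat{f}^{\emptyset}]|=\omega_k^B+o(1)$. This is a genuinely different normalization from the paper's (which works directly with a reference step function $\hat{f}_a$ and a marginalization at the step). However, the two remaining steps of your argument each have a serious gap.

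The ``slow-variation'' claim $D_{N-k-B,n}/D_{N-k,n}=(\omega_k/n)^B(1+o(1))$ is asserted, not proved, and it is essentially the whole content of the proposition in disguise (the reduction of the Korshunov fraction to itself under an $O(1)$ shift in $M$). The paper proves the analogous fact through the marginalization at $x_T(n-a)=M'$, the Gaussian profile of $p(M')$, and the external input $p^-_a(M')=1-o(1)$ from \cite[Lemma~13]{BN07}; your sketch waves at ``refining the saddle-point analysis'' but this is the non-trivial part.

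More seriously, the bridging step rests on a false premise. The limit shape is not $x_T(y)\approx y/\omega_k$ but $x_T(y)/n\to f_k(y/n)=-\ln(1-\omega_k y/n)/\omega_k$, and since $f_k(0)=0$, $f_k(1)=k$ with $f'_k(0)=1<k<1/(1-\omega_k)=f'_k(1)$, the vertical distance from the profile to the Dyck boundary $\hat{f}^{\emptyset}(y)\approx ky$ is $\Theta\bigl(\min(y,n-y)\bigr)$, \emph{not} $\Theta(n)$; it vanishes at both ends. This matters precisely where your $\hat{f}$ and $\hat{f}^{\mathrm{pb}}$ differ, namely $n-\ell\le y<n$. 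At $y=n-j$ for $j=O(1)$ the Dyck slack $\hat{f}^{\emptyset}(y)-x_T(y)$ is a sum of $O(1)$ near-geometric increments and is of order $B$ with probability $\Theta(1)$; therefore the event $x_T(y)\in(\hat{f}(y),\hat{f}^{\mathrm{pb}}(y)]$ for some $y$ near $n$ has probability bounded away from $0$, and $|\cT[N\times n;\hat{f}]|/|\cT[N\times n;\hat{f}^{\mathrm{pb}}]|$ is \emph{not} $1+o(1)$. Your pure-boundary auxiliary, which keeps $\hat{f}^{\emptyset}$ on all of $[\ell,n)$, is therefore the wrong intermediary: it misassigns an order-one chunk of the correction to the bulk. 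This is exactly why the paper confines its perturbations to functions agreeing with $\hat{f}^{\emptyset}-B$ on all of $[n-\ell,n]$ (so the near-$n$ region is never perturbed and telescoping only moves cells with $\ell<y<n-\ell$), and handles the end effect analytically through the $M'$-marginalization and the shift $p(M')\mapsto p(M'-B)$. Salvaging your route would require replacing $\hat{f}^{\mathrm{pb}}$ by a step function within the admissible class and replacing the ``macroscopic gap'' heuristic by the correct statement that the gap and the bridge variance are both $\Theta(\min(y,n-y))$ (which, thanks to $\ell\gg\sqrt n$, still gives exponentially small tails, but only via an adaptive rather than uniform union bound).
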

\noindent
With these tools at hand, we are now ready to prove Proposition
\ref{prop.motifaver}.

\label{sec.withProofPrare}

\bigskip
\noindent
{\it Proof of Proposition \ref{prop.motifaver}.}
Given three distinct states $i$, $j$, $h$, with $i<j<h$, call
$\cM_{ijh}(T)$ the event that in the tableau $T$ there is a
three-state motif on states $\{i,j,h\}$ and $\{\ell_{\a}\}$, for some
$\ell_{\a}$'s.  Similarly, given $2r$ distinct states
$\{(i_a,j_a)\}_{1 \leq a \leq r}$, with $i_a<j_a$ and $j_a<j_{a+1}$,
call $\cM_{(i_1,j_1;\ldots;i_r,j_r)}(T)$ the event that in the tableau
$T$ there is a $r$-uple of $M$-motifs, such that the $a$-th motif has
states $i_a$, $j_a$, and $\{\ell_{\a}^a\}$, for some $\ell_{\a}^a$'s.
Proposition \ref{prop.motifaver} consists in evaluating the two
quantities
\begin{align}
&
\sum_{i<j<h} \eee{} \iver{\cM_{ijh}}_{\cT[N \times n; \hat{f}^{\emptyset}]}
\ef;
&
&
\sum_{(i_1,j_1;\ldots;i_r,j_r)}
\eee{}
\iver{\cM_{(i_1,j_1;\ldots;i_r,j_r)}
}_{\cT[N \times n;\hat{f}^{\emptyset}]}
\ef.
\label{eq.765345437}
\end{align}
We now make a crucial remark: given a backbone structure $B$, the
average over all possible completions of the indicator variables
$\iver{\cM_{ijh}}$ (respectively $\iver{\cM_{(i_1,j_1;\ldots;i_r,j_r)}}$) is zero if
any column of index in the set
$C =\{k(j-1)+1+\alpha, k(h-1)+1+\alpha \}_{1 \leq \alpha \leq k}$
has a red mark
(respectively, in the set
$C = \{k(j_a-1)+1+\alpha \}_{1 \leq a \leq r; 1 \leq \alpha \leq k}$),
otherwise, it is $\prod_{i\in C} y_i^{-1}$, where $y_i$ is the height
of the backbone profile at column~$i$.
As a consequence, backbone structures contributing to the quantities
in (\ref{eq.765345437}), weighted with the factor $\mu(\bc) \prod_{i
  \in C} y_i^{-1}$, correspond to generic backbone structures,
weighted with the factor $\mu(\bc)$, over $(N-kr) \times n$
tableaux. The correspondence is done by just erasing the columns in
$C$. The function $\hat{f}$ is modified accordingly.
Define
\be
\hat{f}^{i_1,\ldots,i_r}
(y)
=\hat{f}^{\emptyset}(y)
- k \sum_{a=1}^r
\iver{y \geq j_a}
\ef.
\end{equation}
Then, the precise statement of the remark above is
\begin{align}
\eee{} \iver{\cM_{ijh}}_{\cT[N \times n; \hat{f}^{\emptyset}]}
&=
\frac{
\big|
\cT[(N-2k) \times n; \hat{f}^{j,h}]
\big|
}{
\big|
\cT[N \times n; \hat{f}^{\emptyset}]
\big|
}
\ef;
\label{eq.765345437b0}
\\
\eee{} \iver{\cM_{(i_1,j_1;\ldots;i_r,j_r)}
}_{\cT[N \times n;\hat{f}^{\emptyset}]}
&=
\frac{
\big|
\cT[(N-kr) \times n;\hat{f}^{j_1,\ldots,j_r}]
\big|
}{
\big|
\cT[N \times n; \hat{f}^{\emptyset}]
\big|
}
\ef.
\label{eq.765345437b}
\end{align}
Thus, the right-hand side of (\ref{eq.765345437b0}) is just the
special case $r=2$ of (\ref{eq.765345437b}). Of course we have
\be
\frac{
\big|
\cT[(N-kr) \times n;\hat{f}^{j_1,\ldots,j_r}]
\big|
}{
\big|
\cT[N \times n; \hat{f}^{\emptyset}]
\big|
}
=
\frac{
\frac{
\big|
\cT[(N-kr) \times n;\hat{f}^{j_1,\ldots,j_r}]
\big|
}{
\big|
\cT[(N-kr) \times n]
\big|
}
}{
\frac{
\big|
\cT[N \times n; \hat{f}^{\emptyset}]
\big|
}{
\big|
\cT[N \times n]
\big|
}
}
\;
\frac{
\big|
\cT[(N-kr) \times n]
\big|
}{
\big|
\cT[N \times n]
\big|
}
\ef.
\end{equation}
We can apply Proposition \ref{prop.stirlfugac} to the rightmost
ratio. Then, if the $j_a$'s are within the range for application of
Proposition \ref{prop.quasidyck}, we can also simplify the leftmost
ratio, to get
%
\begin{align}
\eee{} \iver{\cM_{ijh}}_{\cT[N \times n; \hat{f}^{\emptyset}]}
&
\simeq
\left(
\frac{\w_k}{n}
\right)^{2k}
\ef;
\\
\eee{} \iver{\cM_{(i_1,j_1;\ldots;i_r,j_r)}
}_{\cT[N \times n;\hat{f}^{\emptyset}]}
&
\simeq
\left(
\frac{\w_k}{n}
\right)^{kr}
\ef.
\label{eq.765345437c}
\end{align}
As in Proposition \ref{prop.quasidyck} we just asked for $\ell \gg
\sqrt{n}$, which is compatible with $\ell \ll n$, the fraction of
$2r$-uples $(i_1,j_1;\ldots;i_r,j_r)$ such that some $j_a$'s are out
of range
is subleading, and, using the reasonings at the beginning of Section
\ref{sec.proofstruct}, the corresponding contribution can be included in 
$P_{\rm confl}$.


Then, the straightforward calculation of the number of triplets
$(i,j,h)$, and $2r$-uplets $\{(i_a,j_a)\}_{1 \leq a \leq r}$, at
leading order in $n$, allows to conclude.
\qed

\section{Algorithmic consequences}
\label{sec:other}

The results obtained in this paper open new possibilities for the
study in average of the properties of regular languages, and of the
average-case complexity of algorithms applied to minimal automata.
In this section we mention just a few among these consequences.
 
\begin{cor}
  Minimal automata with $n$ states over a $k$-letter alphabet can
  be randomly generated with $\O(n^{3/2})$ average complexity, using
  Boltzmann samplers.
\end{cor}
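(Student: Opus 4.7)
The plan is to derive the corollary by rejection sampling, combining an existing Boltzmann sampler for accessible deterministic complete automata with a minimality test. Concretely, I would proceed as follows. First, invoke the Boltzmann sampler of Bassino--Nicaud (or any equivalent construction for the family $\cA_{n,k}$), which produces a transition structure $\Pphi$ uniformly distributed over $\cA_{n,k}$ in expected time $\mathcal{O}(n^{3/2})$. Second, sample the set $\cT$ of terminal states according to $\mu_b^{(n)}$ by $n$ independent Bernoulli draws, in time $\mathcal{O}(n)$. Third, apply Hopcroft's partition-refinement minimization algorithm to the resulting automaton $A=(\Pphi,\cT)$, in time $\mathcal{O}(n\log n)$, and accept $A$ if and only if every Myhill--Nerode class is atomic; otherwise, restart the whole procedure.

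The correctness of the output distribution is standard: rejection sampling from the uniform distribution on $\cA_{n,k}$, conditioned on the minimality event, yields precisely the uniform distribution on minimal automata (and similarly with $\mu_b$ for the weighted version). The efficiency relies on a uniform lower bound on the acceptance probability. By Theorem~\ref{theo.fullshort}, this probability is
\[
\exp\bigl(-(1-2b(1-b))\, c_k\, n^{-k+2}\bigr)\bigl(1+o(1)\bigr),
\]
which, since $c_k<\tfrac12$ and $n^{-k+2}\le 1$ for every $k\ge 2$ and $n\ge 1$, is bounded below by a positive constant $\alpha_{k,b}>0$ independent of $n$. In particular, for $k\ge 3$ the acceptance probability tends to $1$; for $k=2$ it tends to $\exp(-\tfrac12(1-2b(1-b))\,c_2)>0$, matching the experimental $\simeq 85.3\%$ mentioned in the introduction when $b=\tfrac12$.

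Consequently, the number of trials before acceptance is a geometric random variable with mean $\mathcal{O}(1)$. Each trial costs $\mathcal{O}(n^{3/2})$ for the Boltzmann sampler, plus $\mathcal{O}(n)$ for the terminal states, plus $\mathcal{O}(n\log n)$ for Hopcroft's minimality test, i.e., $\mathcal{O}(n^{3/2})$ in total. Multiplying by the expected number of trials yields the claimed $\mathcal{O}(n^{3/2})$ average complexity.

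There is no real obstacle in this argument beyond ensuring that the asymptotic statement of Theorem~\ref{theo.fullshort} transfers to a nonasymptotic lower bound on the acceptance probability for all $n$; this is immediate because the proportion of minimal automata is a positive rational for every finite $n$, and the theorem shows it has a positive limit, so one may take $\alpha_{k,b}$ to be the infimum of this sequence. The only step that deserves attention is the choice of sampler: one must cite a Boltzmann construction whose $\mathcal{O}(n^{3/2})$ cost dominates the subsequent minimality test, so that the minimization step does not become the bottleneck; any $\mathcal{O}(n\log n)$ minimization algorithm suffices for this.
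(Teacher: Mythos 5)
Your proof is correct and follows essentially the same route as the paper: generate accessible complete deterministic automata with the Bassino--Nicaud Boltzmann sampler in $\mathcal{O}(n^{3/2})$ average time, test minimality, and reject non-minimal draws, relying on Theorem~\ref{theo.fullshort} to bound the acceptance probability below by a positive constant. The paper states this in one sentence without detailing the minimality test or the geometric bound on the number of trials; your elaboration (Bernoulli draw of $\cT$, Hopcroft in $\mathcal{O}(n\log n)$, the infimum argument to pass from the asymptotic to a uniform lower bound) is accurate and merely fills in what is implicit.
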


The random generator for complete deterministic accessible automata
given in \cite{BN07} is based on a Boltzmann sampler \cite{boltz}, its
average complexity is $\O(n^{3/2})$. As from
Theorem~\ref{theo.fullshort} there is a constant proportion of minimal
automata amongst accessible ones, the rejection method can be
efficiently applied to randomly generate a minimal automaton.  Note
that such a generator\footnote{Available at {\tt
    http://regal.univ-mlv.fr/}} has already been implemented
in~\cite{BDN07}, though there were no theoretical result on the
efficiency of this algorithm at that time.

\begin{cor}
\label{thm:moore}
  For the uniform distribution on complete deterministic accessible
  automata, the average complexity of Moore's state minimization
  algorithm is $\Theta( n \log \log n)$.
\end{cor}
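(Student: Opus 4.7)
The plan is to reduce the claim to a bound on the average number $D$ of refinement rounds performed by Moore's algorithm. Recall that Moore's algorithm maintains a partition $\Pi_d$ in which $p,q$ lie in the same block iff $\cT(\pphi(p,u))=\cT(\pphi(q,u))$ for every word $u$ with $|u|\le d$; with a radix sort on $(k+1)$-tuples of block labels, each round costs $\O(kn)=\O(n)$ for fixed $k$, and the algorithm halts at the smallest $D$ with $\Pi_{D+1}=\Pi_D$. Hence the total running time is $\Theta(nD)$ and the corollary reduces to showing that $\eee(D)=\Theta(\log\log n)$ under the uniform measure.

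For the upper bound I would fix a pair $p\neq q$ and run a simultaneous breadth-first exploration of $\Pphi$ from $p$ and $q$ up to depth $d$. Using the joint-BFS machinery developed for Proposition~\ref{prop.afterp5p4}, and in particular the factorial estimate~(\ref{eq.detached28}), one obtains that outside a rare event of probability $o(n^{-2})$, for every $d$ with $k^d\ll n^{1/4}$ the $\Theta(k^d)$ states reached from $p$ and $q$ are pairwise distinct. Conditional on this, the values of $\cT$ on the visited states are independent Bernoulli$(\tfrac12)$, so $\ppp(p\sim_d q)\le (2b(1-b))^{\Theta(k^d)}=2^{-\Theta(k^d)}$. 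Union-bounding over the $\binom{n}{2}$ pairs and taking $d=\lceil\log_k(C\log n)\rceil$ for $C$ a large enough constant gives $\ppp(D>d)=o(1)$, which together with the deterministic bound $D\le n$ yields $\eee(D)=\O(\log\log n)$.

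For the lower bound I would produce, with probability bounded away from $0$, at least one pair of states that is not yet separated after $d_-=\lfloor c\log_k\log n\rfloor$ rounds, for a suitably small $c<1$. The idea is to select greedily a collection of $\Theta(n/(\log n)^{c})$ candidate pairs whose depth-$d_-$ combined BFS balls are pairwise disjoint; each such pair survives round $d_-$ with probability at least $2^{-\Theta((\log n)^{c})}$, which for $c<1$ is asymptotically much larger than $1/n$, so the expected number of survivors tends to infinity. A second-moment computation, exploiting the near-independence ensured by the disjointness of BFS balls, then shows that at least one pair survives with probability bounded away from $0$, whence $\eee(D)\ge(1-o(1))d_-$.

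The main technical obstacle will be the same as in Proposition~\ref{prop.afterp5p4}: the BFS estimates are cleanest on uniform random $k$-maps, whereas the statement concerns uniform accessible transition structures. I would transfer the bounds through the same bijection with $k$-Dyck tableaux used in Section~\ref{sec.proofs}, losing only a polynomial factor $\sqrt{n}$ in probability, which is harmless since all quantities of interest here are logarithmic. The restriction $b=\tfrac12$ plays no role beyond fixing a numerical constant: the same argument in fact yields $\eee(D)=\Theta(\log\log n)$ for any fixed $b\in\,]0,1[$, with the base $2$ replaced by $1/(2b(1-b))$ in the innermost logarithm.
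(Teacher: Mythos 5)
Your proposal proves the corollary from scratch via a direct BFS and second-moment analysis; this is a genuinely different route from the paper, whose proof is a short reduction to prior work. The paper imports the $\O(n\log\log n)$ upper bound from~\cite{david11} (which establishes it for random $k$-maps, i.e.\ not-necessarily-accessible $n$-state automata) and transfers it to accessible automata via the accessible-part-size conditioning; it imports the $\Omega(n\log\log n)$ lower bound from~\cite{BDN2009} (which establishes it for \emph{minimal} automata) and transfers it by invoking Theorem~\ref{theo.fullshort}, since a constant fraction of accessible automata are minimal. The paper's route is thus very short and showcases a concrete application of its main theorem; yours is more self-contained, but essentially re-derives the internals of the two cited references rather than using them as black boxes.

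That said, two steps in your sketch need repair. First, the claim that the $\Theta(k^d)$ states visited by the joint BFS are \emph{pairwise distinct} outside an event of probability $o(n^{-2})$ is false: the probability of even a single collision among $v=\Theta(k^d)$ BFS steps is $\Theta(v^2/n)$, and for $v\sim\log n$ this is nowhere near $o(n^{-2})$. What~(\ref{eq.detached28}) actually gives, with $h=k+1$, is that there are \emph{at most $k$} collisions outside an event of probability $\mathcal{O}\big((v^2/n)^{k+1}\big)$, which \emph{is} $o(n^{-2})$ when $k\ge 2$; this weaker statement still leaves $\Theta(k^d)$ fresh and independent $\cT$-constraints, so the bound $\ppp(p\sim_d q)\le(2b(1-b))^{\Theta(k^d)}$ survives once the event is stated correctly. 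Second, for the lower bound it is not enough to exhibit a pair with $p\sim_{d_-}q$: if $p\sim q$ the pair never triggers a further refinement round, so you must also ensure $p\not\sim q$. A standard fix is to ask for disjointness of the combined BFS balls up to depth $d_-+1$ and a $\cT$-mismatch at level $d_-+1$ (which occurs with probability $1-2b(1-b)$ and preserves independence across the candidate pairs). Finally, from ``at least one candidate pair survives with probability bounded away from $0$'' you only get $\eee(D)=\Omega(d_-)$, not $\eee(D)\ge(1-o(1))d_-$; this is however enough for the $\Theta(n\log\log n)$ statement.
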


\begin{proof}
The average complexity of Moore's state minimization algorithm for the
uniform distribution on n-state deterministic automata over a finite
alphabet is $\O( n \log \log n)$ \cite{david11}. The upper bound for
accessible automata is then obtained studying the size of the
accessible part of a $k$-random map \cite{cn12, kor78}. Moreover from
\cite{BDN2009} the lower bound of Moore's algorithm applied on minimal
automata with $n$ states is $\Omega(n \log \log n)$.  Using
Theorem~\ref{theo.fullshort}, this is also a lower bound for complete
deterministic accessible automata.
\end{proof}

\begin{cor}
For the uniform distribution on complete deterministic accessible
automata, there exists a family of implementations of Hopcroft's state
minimization algorithm whose average complexity is $\O( n \log \log n)$.
\end{cor}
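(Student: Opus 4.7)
The approach is parallel to the proof of Corollary~\ref{thm:moore}: bound separately above and below, transferring from larger reference ensembles to the accessible one. First, I would invoke the existing average-case analysis of Hopcroft's state minimization algorithm on the uniform distribution over all $n$-state complete deterministic automata — the unconstrained ensemble, without accessibility — for which an appropriate family of implementations, differing in the rule used to select the next splitter class, achieves expected running time $\O(n \log \log n)$. To transfer this upper bound to the uniform distribution on accessible automata I would reuse the reduction already exploited in the proof of Proposition~\ref{prop.afterp5p4}: a uniform $k$-map conditioned on the size $m$ of its accessible part induces the uniform distribution on accessible transition structures of size $m$, with $m = \w_k n + \O(\sqrt{n})$ with overwhelming probability~\cite{cn12, kor78}, so a $\O(n \log \log n)$ running-time bound on $k$-maps pulls back, up to a multiplicative constant, to $\O(n \log \log n)$ on $\A_{n,k}$.

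For the lower bound, I would transport an $\Omega(n \log \log n)$ lower bound for Hopcroft's algorithm on \emph{minimal} automata — available by the same kind of adversarial arguments used in~\cite{BDN2009} for Moore's algorithm — and then apply Theorem~\ref{theo.fullshort}, which guarantees that the fraction of minimal automata inside $\A_{n,k}$ is $1 - \smfrac{1}{2} c_k n^{-k+2} + o(n^{-k+2})$ and, in particular, a constant bounded away from zero for every $k \geq 2$ (cf.\ Table~\ref{tab.kcw}). Restricting the expectation of the running time to this subensemble therefore loses at most a multiplicative constant, so the $\Omega(n \log \log n)$ bound valid on minimal inputs lifts to the same bound on the full uniform distribution on $\A_{n,k}$.

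The main obstacle will be the first ingredient. Unlike Moore's essentially deterministic algorithm, Hopcroft's algorithm leaves genuine freedom in the order in which splitter classes are processed, and the average complexity is known to be sensitive to this choice; the statement's hedge \emph{``there exists a family of implementations''} makes this explicit. Exhibiting one such family and supplying a rigorous $\O(n \log \log n)$ average-case analysis on the unconstrained deterministic ensemble is the non-routine step. Once this is in place, both transfers — from the unconstrained ensemble to $\A_{n,k}$ via the $k$-map reduction, and from minimal automata to accessible ones via Theorem~\ref{theo.fullshort} — follow from arguments already assembled in this paper.
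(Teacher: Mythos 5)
Your proposal both over- and under-shoots the paper's argument. The key fact you are missing is that \cite{david11} establishes a family of implementations of Hopcroft's algorithm whose running time is \emph{pointwise} dominated by the running time of Moore's algorithm, i.e.\ bounded by it on every single input automaton, not merely in expectation. Given that, the upper bound is one line: the expected running time of such an implementation over any distribution whatsoever is at most the expected running time of Moore's algorithm over that same distribution, which by Corollary~\ref{thm:moore} is $\O(n \log\log n)$ on the uniform distribution over $\cA_{n,k}$. No fresh average-case analysis of Hopcroft on the unconstrained $k$-map ensemble is required, and no transfer step is needed; you correctly flag ``exhibiting one such family and supplying a rigorous $\O(n\log\log n)$ average-case analysis'' as the non-routine step, but this means your argument as written has an open hole precisely where the paper has a citation.

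In addition, the entire second paragraph of your proposal — transporting an $\Omega(n\log\log n)$ lower bound for Hopcroft on minimal automata and lifting it via Theorem~\ref{theo.fullshort} — addresses a claim the corollary does not make. Unlike Corollary~\ref{thm:moore}, which asserts a $\Theta$-bound for Moore's algorithm, the Hopcroft corollary asserts only an $\O(n\log\log n)$ upper bound; indeed the paper explicitly remarks afterwards that it is still unknown whether some implementation achieves $\Theta(n)$. So the lower-bound machinery is unnecessary here, and building it would in any case require a new average-case lower bound for Hopcroft on minimal inputs that is not in the paper's toolkit (the cited \cite{BBC09} result is a worst-case bound, not the ingredient you would need).
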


From~\cite{david11} a family of implementations of Hopcroft's state
minimization algorithm are always faster than Moore's algorithm. The
result follows from Corollary~\ref{thm:moore}.  In~\cite{BBC09} the
lower bound on the algorithm is proved to be $\O(n \log n)$ for
any implementation.  Though it is still unknown whether there exists
an implementation whose average complexity is $\Theta(n)$.



\newpage

\appendix

\section{Details of the proof of Proposition \ref{prop.afterp5p4}}

We perform here a detailed derivation of equation
(\ref{eq.detached28}), that has been omitted in the body of the paper.
In this section we use the notation $(n)_c \equiv
n(n-1)\cdots(n-c+1)$.

Thus we have the simultaneous breadth-first tree exploration started
at a pair $\{i,j\}$ of states, in which we do not follow the
\emph{leaf} nodes, i.e., those nodes where, in one or both of the two
trees, there is a state already visited by the exploration.

This exploration is finite (the number of steps, $L_{ij}$, being
bounded by $\sim n$), as we cannot indefinitely visit new states.
A string $\tau^{(ij)}$ in $\{0,1,2\}^{L_{ij}}$
is associated to this procedure, (just use $\tau \equiv \tau^{(ij)}$
when no confusion arises), with $\tau_s$ corresponding to the number
of states already visited, among the two involved with the $s$-th
step.

The exclusion of sink and quasi-sink motifs leads to the fact that,
among $\{\tau_1,\ldots,\tau_k\}$ there must be at least a value
zero. Say $\a^*_{ij}$ is the first lexicographc letter with this
property. A further consequence is that we have at least $k+1$
non-zero entries $\tau_s$, for $1 \leq s \leq L_{ij}$. 

Recognize that the sequence $(v_1, v_2, \ldots)$ is exactly the
ordered sequence of positions $s$ at which $\tau_s>0$.  Call
$|\tau|_s=\sum_{t\leq s} \tau_t$.

We now fix $v \ll n/2$, and $h = \mathcal{O}(1)$.
The probability for the $h$-uple $(v_1,\ldots,v_h)$ is
\begin{align}
P(v_1,\ldots,v_h)
&=
\sum_{\{\tau_{v_j}\} \in \{1,2\}^h}
\frac{1}{n^{2 v_h}}
(n)_{2v_h - |\tau|_{v_h}}
\bigg(  \prod_{j=1}^{h} W_j  \bigg)
\ef;
\\
W_j
&=
\left\{
\begin{array}{ll}
2(v_j-|\tau|_{v_j-1})+1 &  \tau_{v_j}=1 \\
(v_j-|\tau|_{v_j-1})^2  &  \tau_{v_j}=2
\end{array}
\right.
\end{align}
We can bound from above the probability that $v_h \leq v$.
\be
\begin{split}
\proba(v_h \leq v)
&=
\!\!
\sum_{\substack{
(v_1,\ldots,v_h) \\ v_h \leq v }}
\!\!\!
P(v_1,\ldots,v_h)
\leq
\!\!
\sum_{\substack{
(v_1,\ldots,v_h) \\ v_h \leq v }}
\sum_{\{\tau_{v_j}\} \in \{1,2\}^h}
\!\!\!
\frac{(n)_{2v_h}}{n^{2 v_h}}
(n-2v)^{-|\tau|_{v_h}}
\bigg(  \prod_{j=1}^{h} W_j  \bigg)
\\
&\leq
\!\!
\sum_{\substack{
(v_1,\ldots,v_h) \\ v_h \leq v }}
\sum_{\{\tau_{v_j}\} \in \{1,2\}^h}
\prod_{j=1}^{h} 
\left(\frac{2v_j+1}{n-2v}\right)^{\tau_{v_j}}
\leq
\!\!
\sum_{\substack{
(v_1,\ldots,v_h) \\ v_h \leq v }}
\!\!
\left( n-2v \right)^{-h}
\prod_{j=1}^{h} 
(2v_j+2)
\ef;
\end{split}
\end{equation}
where in the last passage we used the fact that $\frac{1}{n-2v}<1$.
Now we use the fact that, for $f(v_1,\ldots,v_h)$ a positive function,
\be
\sum_{\substack{
v_1 < \ldots <v_h \\ v_h \leq v }}
f(v_1,\ldots,v_h)
\leq
\frac{1}{h!}
\sum_{\substack{
v_1, \ldots, v_h \\ v_j \leq v }}
f(v_1,\ldots,v_h)
\end{equation}
to get
\be
\begin{split}
\proba(v_h \leq v)
&
\leq
\frac{
\left( n-2v \right)^{-h}
}{h!}
\sum_{\substack{
v_1, \ldots, v_h \\ v_j \leq v }}
\prod_{j=1}^{h} 
(2v_j+2)
=
\frac{1}{h!}
\left(
\frac{v(v+1)}{n-2v}
\right)^{h}
\ef.
\end{split}
\end{equation}
This gives the claim in equation (\ref{eq.detached28}).

\section{Some statistical properties of tableaux}

We investigate here some statistical properties of tableaux and
$k$-Dyck tableaux, concerning the limit distribution of the marks, and
its fluctuations. These results are interesting \emph{per se}, and, at
the aims ofthis paper, will be instrumental to determine ratios of
cardinalities of various sets of tableaux, that, in turns, are used in
the proof of Proposition \ref{prop.motifaver}.

Associate to the backbone part $B_T$ of a tableau $T$ the sequence $\bc
= (c_1,c_2,\ldots,c_n)$, as $c_{y} = x_T(y+1) - x_T(y)-1$ (let
conventionally $x_T(0)=0$ and $x_T(n+1) \equiv kn+2$).  The
$c_y$'s are non-negative integers, related to the incremental
steps in the tableau shape $x_T(y)$.


Call $\mu(\bc)$ the number of tableaux
having the given backbone $\bc$.
This quantity is easily determined. For $\bc \in \mathbb{N}^M$,
$\mu(\bc)=0$ if $\sum_y c_y \neq (k-1)n+1$, and otherwise
\be
\label{eq.muc}
\mu(\bc)
=
\prod_{y=1}^n y^{c_y}
\ef.
\end{equation}
%
The property of a tableau $T=(B_T,W_T)$ of being $k$-Dyck depends only
on the backbone part $B_T$, and we can naturally talk of
\emph{$k$-Dyck backbones}. The factorization above still holds for
$k$-Dyck tableaux, if it is intended that $\sum_B$ is restricted to
$k$-Dyck backbones.


The backbone profile has a definite limit shape for large $n$, that we
can readily determine. Define the function $f_{\kappa}(y):[0,1]\to[0,k]$
\be
f_{\kappa}(y) = \lim_{n \to \infty}
\frac{1}{n}
\eee{\,x_T(ny)}
\end{equation}
where the average is taken w.r.t.\ the uniform measure
on $\cT[(\lfloor \kappa n \rfloor +1) \times n]$.
Provided that we have pointwise convergence to a
differentiable function (as we will see, this is the case here), this
function is translated into marginals on the sequence $\bc$, through
$\frac{{\rm d}f_{\kappa}(y)}{{\rm d}y} 
= 1 + \lim_{n \to \infty} \eee{\,c_{yn}}$.


We deal with the overall constraint by introducing a Lagrange
multiplier, associated to the horizontal width of the tableau, (to be
tuned later on in order to have concentration on the appropriate width
value $\lfloor \kappa n \rfloor +1$). The resulting measure is
\be
\label{eq.5485242gr5}
\mu_{\omega} (\bc)
=
\prod_{y=1}^n
(\omega y)^{c_y}
\ef.
\end{equation}
Now the variables $c_y$ are independent geometric variables, with
parameter $\rho = \omega y$ (i.e., $p_y(c) = (1-\rho) \rho^c$).
Average and variance are given by
\begin{align}
\eval{c_y}_{\omega}
&=
\frac{\omega y}{1-\omega y}
\ef;
&
\label{eq.variaCy}
\eval{c_y^2}_{\omega}
-
\eval{c_y}_{\omega}^2
&=
\frac{\omega y}{(1-\omega y)^2}
\ef.
\end{align}
The value of $\omega$ is determined by the equation
$
\sum_{y=1}^n
\frac{\omega y}{1-\omega y}
=
(\kappa -1)n+1
$,
that is, in the large $n$ limit,
\be
\int_{0}^n
\dx{y}
\frac{\omega y}{1-\omega y}
=
(\kappa -1)n
+ \mathcal{O}(1)
\ef.
\end{equation}
Through the scaling $y \to y/n$, $\omega \to \omega n$ we can extract
the leading contribution
$
\int_{0}^1
\dx{y}
\frac{\omega y}{1-\omega y}
=
\kappa -1
$,
and, as we have
\be
\label{eq.87657876}
\int_{y=0}^Y
\dx{y}
\frac{\omega y}{1-\omega y}
=
-Y - \frac{\ln(1-\omega Y)}{\omega}
\ef,
\end{equation}
we get that value $\omega_\kappa $ for the multiplier is the root in the
interval $(0,1)$ of the transcendental equation
\be
\label{eq.defw1}
\kappa 
=
-\frac{\ln(1-\omega)}{\omega}
\ef,
\end{equation}
that is, the same constants defined in 
(\ref{eq.67476548}).
Then, the limit curve is just deduced from (\ref{eq.87657876}) with
$\omega=\omega_\kappa $:
\be
\label{eq.87657876b}
f_\kappa (y)
=
-\frac{\ln(1-\omega_\kappa  y)}{\omega_\kappa }
\ef.
\end{equation}
Note that the derivative of $f_\kappa (y)$ at $y=0$ and $y=1$ 
are
$
f'_\kappa (0) = 1
$
and
$
f'_\kappa (1) = \frac{1}{1-\w_\kappa }
$,
which are respectively smaller and larger than $\kappa $, for any $\kappa>1$ real.
More generally, $\big( f'_\kappa (y) \big)^{-1}=1-\w_\kappa  y$ is the density of
backbone marks around $x=f_\kappa (y)$. As every column is marked, either in
red or in black, the density of black marks around $x=f_\kappa (y)$ is 
$\w_\kappa  y$. As the position of a black mark in a given column is chosen
uniformly in the range $\{1,\ldots,y\}$, the probability of putting
a black mark in a given position $(x,y)$, provided that $x > f_\kappa(y)$,
is $\w_\kappa /n$, notably \emph{regardless of $x$ and $y$}.

A further useful property of the backbone is the calculation of the
variance, in the system with the Lagrange multiplier (and thus without
the constraint $\sum_y c_y = (k-1)n+1$), which is given by the
integral
\be
\label{eq.wievaria}
S(Y) = 
\int_{0}^Y
\dx{y}
\frac{\omega y}{(1-\omega y)^2}
=
\frac{Y}{1-\omega Y} + \frac{\ln(1-\omega Y)}{\omega}
\ef.
\end{equation}
Through the Central Limit Theorem we can deduce from this expression
the asymptotic probability for fluctuations from the limit shape. For
a given row $y$, such that $y$, $n-y \gg 1$, the probability of having
$x_T(y)=\lfloor n f_{\kappa}(y) \rfloor + \xi$ is approximatively
(using here the variance function (\ref{eq.wievaria}))
\begin{align}
p^{\rm bridge}_{n,y}(\xi)
&=
\frac{1}{\sqrt{2 \pi s}}
\exp \left[
-\frac{\xi^2}{2s}
\right]
\ef;
&
s=n
\frac{S(1)}{S(y/n) (S(1)-S(y/n))}
\ef.
\label{eq.Wie1P}
\end{align}
This is of course only the case $r=1$ of the basic formulas for the
$r$-point joint distribution in an inhomogeneous Wiener Process
$x(t)$, derived from the continuum limit of the sum of independent
random variables with variance $S(t)$, as in our case (see
e.g.~\cite[sec.\;5.6]{Kac}). However, this formula will be sufficient
at our present purposes.

We have now all the ingredients to prove
Proposition \ref{prop.quasidyck}.

\noindent
{\it Proof of Proposition \ref{prop.quasidyck}.}
We start by comparing different functions $\hat{f}$ satisfying the
constraint, at a fixed value $M$. 
Remark that any two such functions $\hat{f}_1$, $\hat{f}_2$ differ by
a number of cells bounded by $Bn$,
and that, if 
$\hat{f}_1(y) \leq \hat{f}_2(y)$ for all $y$,
$\big| cT[M \times n; \hat{f}_1] \big|
\leq \big| cT[M \times n; \hat{f}_2] \big|$.
Thus, by telescoping, up to a factor $Bn$, it suffices to estimate 
the quantity
\[
\frac{
\big| cT[M \times n; \hat{f}_2] \big|
-
\big| cT[M \times n; \hat{f}_1] \big|
}
{\big| cT[M \times n] \big|}
\]
for a pair of functions $\hat{f}_1$, $\hat{f}_2$ differing by a single
cell in the position $(x,y)$. This quantity is positive at sight.

Note that the constraint on functions $\hat{f}$ forces $y$, $n-y \gg
\sqrt{n}$. Thus, the use of (\ref{eq.Wie1P}) (based on use of the
Central Limit Theorem) is legitimate, and we have
\be
\begin{split}
\frac{
\big| cT[M \times n; \hat{f}_2] \big|
-
\big| cT[M \times n; \hat{f}_1] \big|
}
{\big| cT[M \times n] \big|}
 & \sim
p^{\rm bridge}_{n,y}(ky-n f(y)+\mathcal{O}(1))
\\ &
\sim
\exp\left(-\mathcal{O}\left( \smfrac{\min(y,n-y)^2}{n} \right)\right)
\ef,
\end{split}
\end{equation}
where the constant is positive at sight, and could be determined from
the expressions (\ref{eq.wievaria}) giving $S(y/n)$ and $S(1)-S(y/n)$
(which are of order 1), and the quantity $ky-n f_k(y)$ (with $f_k(y)$
as in (\ref{eq.87657876b})), which is of order $\min(y,n-y)$. The
precise value is accessible with some calculation, but irrelevant at
our purposes.

Now that we determined that all functions $\hat{f}$ in the appropriate
range produce the same ratio, up to an absolute error which is
exponentially small, we can evaluate
this ratio, for a reference $\hat{f}$ of our choice. We choose, for
any value $a$ such that both $a$ and $n-a$ are of order $n$,
\be
\hat{f}_a(y)
=
\left\{
\begin{array}{ll}
\hat{f}^{\emptyset}(y)    & y < n-a \\
\hat{f}^{\emptyset}(y)-B  & y \geq n-a
\end{array}
\right.
\end{equation}
For a tableau $T$, call $M'$ the value such that
$x_T(n-a)=M'<M-k(a+1)$, the latter inequality being forced
by the constraint $x_T(y)\leq \hat{f}(y)$.
We can thus express the ratio
$\big| cT[M \times n; \hat{f}] \big| / \big| cT[M \times n] \big|$
in the form
\be
\begin{split}
\frac{
\big| cT[M \times n; \hat{f}] \big|
}{
\big| cT[M \times n] \big|
}
&=
\frac{1}
{\displaystyle{
\sum_{\bc}
\mu_{\w}(\bc)
\iver{|\bc|=M-n}
}}
\sum_{M' < M-k(a+1)}
\sum_{\bc}
\mu_{\w}(\bc)
\iver{|\bc|=M-n}
\\
&
\qquad
\times
\iver{x_T(n-a)=M'}
\iver{x_T(y)\leq \hat{f}(y)}_{y>n-a}
\iver{x_T(y)\leq \hat{f}(y)}_{y<n-a}
\ef.
\end{split}
\end{equation}
The marginalisation on the value of $M'$ makes the two event
$\iver{x_T(y)\leq \hat{f}(y)}_{y>n-a}$ and
$\iver{x_T(y)\leq \hat{f}(y)}_{y<n-a}$ independent, and in fact the
second one depends only on $n-a$ and $M'$, and the
first one 
only on $n$, $a$ and $M-M'$ (not on $B$). Equivalently, as
$N=kn+1=M-B$, we can use $n$, $a$ and $M'-B$ as independent
parameters,
i.e.,
\be
\begin{split}
\frac{
\big| cT[M \times n; \hat{f}] \big|
}{
\big| cT[M \times n] \big|
}
&=
\sum_{M' < M-k(a+1)}
p(M')
\;
p^+_{n,a}(M'-B)
\;
p^-_{a}(M')
\ef.
\end{split}
\label{eq.764654765Mpre}
\end{equation}
where $p(M')$ is nothing but $p^{\rm bridge}_{n,n-a}(M'-n f(y))$, and
corresponds to the expectation of
$\iver{x_T(n-a)=M'}$ alone.
From equation (\ref{eq.Wie1P})
we know that the leading contribution to
$p(M')$ is well-approximated by a Gaussian, with mean and variance 
\begin{align}
\eee{} \left(\smfrac{M'}{M}\right)
&=
f\left(\smfrac{n-a}{n} \right)
\ef;
&
\eee{} (M')^2
-
(\eee{} M')^2
=
M
\frac{S(1)}{S(\frac{n-a}{n}) \big(S(1)-S(\frac{n-a}{n}) \big)}
\ef,
\end{align}
where, as $(M-1)/n=k+\mathcal{O}(1/n)$, up to subleading corrections
we can use the parameters $k$ and $\w_k$ in the determination of
$f(y)$ and $S(y)$. Similarly, as
the width of the Gaussian is of order $\sqrt{N}$, up to subleading
corrections we can replace $p(M')$ by $p(M'-B)$, and write, after a
translation,
\be
\begin{split}
\frac{
\big| cT[M \times n; \hat{f}] \big|
}{
\big| cT[M \times n] \big|
}
&=
\sum_{M' < N-k(a+1)}
p(M')
\;
p^+_{n,a}(M')
\;
p^-_{a}(M'+B)
\ef.
\end{split}
\label{eq.764654765M}
\end{equation}
The expression analogous to (\ref{eq.764654765M}), for $M=N$,
reads
\be
\begin{split}
\frac{
\big| cT[N \times n; \hat{f}] \big|
}{
\big| cT[N \times n] \big|
}
&=
\sum_{M' < N-k(a+1)}
p(M')
\;
p^+_{n,a}(M')
\;
p^-_{a}(M')
\ef.
\end{split}
\label{eq.764654765N}
\end{equation}
As the Korshunov constant is of order $1$ for all $k>1$, the
values of the Gaussian $p(M')$ are of order $1/\sqrt{N}$ at the
maximum, and the functions
$p^+_{n,a}(M')$ and $p^-_{a}(M')$ are (respectively decreasing and
increasing) monotonic in $M'$, we have that these functions must be of
order 1 in the region relevant for $p(M')$. Actually, in this region
we even have $p^-_{a}(M') = 1-o(1)$ (see \cite[Lemma 13]{BN07}), and
in particular, as a corollary, $p^-_{a}(M')$ is smooth possibly up to 
small corrections. Then, the comparison of (\ref{eq.764654765M}) and
(\ref{eq.764654765N}) allows to conclude.
\qed

\end{document}